\documentclass[12pt]{amsart}
\textwidth 175mm \textheight 227mm \thispagestyle{empty}
\topmargin -8mm \oddsidemargin -0.5cm \evensidemargin -5mm

\newtheorem{prop}{Proposition}[section]

\newtheorem{cor}{Corollary}

\title
[Normal forms and gauge symmetries of local dynamics] {Normal
forms and gauge symmetries of local dynamics}
\author{ S.L. Lyakhovich and  A.A. Sharapov}
\address{Department of Quantum Field Theory, Tomsk State University, Tomsk 634050, Russia}
\email{sll@phys.tsu.ru, sharapov@phys.tsu.ru}

\begin{document}

\begin{abstract}
A systematic procedure is proposed for deriving all the gauge
symmetries of the general, not necessarily variational, equations
of motion.  For the variational equations, this procedure reduces
to the Dirac-Bergmann algorithm for the constrained Hamiltonian
systems with certain extension: it remains applicable beyond the
scope of Dirac's conjecture. Even though no pairing exists between
the constraints and the gauge symmetry generators in general
non-variational dynamics, certain counterparts still can be
identified of the first- and second-class constraints without
appealing to any Poisson structure. It is shown that the general
local gauge dynamics can be equivalently reformulated in an
involutive normal form. The last form of dynamics always admits
the BRST embedding, which does not require the classical equations
to follow from any variational principle.
\end{abstract}
  \maketitle

\section{Introduction}
By a local dynamical system we understand the dynamics whose true
trajectories are defined by a finite system of ordinary
differential equations. Given a local dynamical system, the
question arises of finding all its gauge symmetries. If the
equations of motion were variational, the Dirac-Bergmann algorithm
\cite{Dirac}, \cite{HT} would do the job. In fact, the
Dirac-Bergmann algorithm does even more: it iteratively brings the
variational equations of motion to a canonical form of the
constrained Hamiltonian dynamics with a complete set of
constraints on the phase-space variables. In this form, the
equations of motion are totally self-contained, having no other
constraints that can be derived from any compatibility condition.
It is the normal form of the variational dynamics, which is
exploited in physics for many different purposes. Besides the
other things, this canonical form allows one to classify the
constraints by grouping them into the first and second classes,
and it also results in identifying the gauge symmetry generators
as the Hamiltonian vector fields associated to the first-class
constraints. For the general (i.e., not necessarily variational)
local dynamics, no canonical form has been yet identified  which
might be viewed as an equivalent of the Hamiltonian dynamics with
a complete set of constraints. The correspondence between the
complete set of first-class constraints, including the primary and
secondary ones, and the gauge symmetries is known as the Dirac
conjecture. This conjecture is not always valid even for the
regular variational equations, and the counterexamples are well
known \cite{HT}. Beyond the scope of validity of the Dirac
conjecture, no systematic procedure has been known for identifying
the gauge symmetries even for the constrained Hamiltonian
dynamics.

In this paper, we work out an algorithm of bringing  the general
(not necessarily variational) dynamics to a certain normal form.
In this form, the gauge  symmetries and the complete set of
constraints are explicitly identified. Since no natural pairing
exists between the constraints and gauge symmetries unless the
dynamics are Hamiltonian, our algorithm derives them
independently: the constraints are found first, and then the
symmetries are identified. Despite of the fact that Dirac's
classification of constraints is conventionally defined in terms
of the Poisson brackets, the notions of the first- and
second-class constraints can be naturally extended to the
non-Hamiltonian systems. The algorithm works equally well in the
constrained Hamiltonian dynamics, finding out all the gauge
symmetries, even though the Dirac conjecture does not hold for the
system. So, the method may have some new impact on the
well-studied area of the constrained Hamiltonian dynamics.

Below we give some introductory comments on the background of the
problem and outline the contents  of the paper.

Any system of ordinary differential equations can always be
depressed to the first order  by introducing new variables, so we
start with a first-order autonomous system. Under certain
regularity conditions\footnote{From the viewpoint of physics, the
regularity means that the system has a definite number degrees of
freedom, i.e., the equations admit the same number of independent
Cauchy data in every domain of the configuration space. A more
accurate formulation of the regularity conditions is given in
Section 2 as well as the explanation how to bring the general
regular equations to the normal form (\ref{lambda}),
(\ref{Constr1}).} the first order ODE system can be always brought
(by adding new auxiliary variables, if necessary) to the following
form
\begin{equation}\label{lambda}
    \dot{x}{}^i=V^i(x) + \lambda^\alpha Z^i_\alpha(x)\, , \qquad \alpha =1, \ldots , l , \quad i= 1, \ldots ,
    n\,,
\end{equation}
\begin{equation}\label{Constr1}
T_a(x)=0\, , \qquad a=1, \ldots , m \, ,
\end{equation}
where $x^i, \lambda^\alpha$ are the decision variables. Thus, in
the regular local dynamics, it is always possible to have only two
types of variables: (i) the phase space coordinates $x^i$ being
subject to the normal differential equations (\ref{lambda}) and
the constraints (\ref{Constr1}), and (ii) the variables
$\lambda^\alpha$ entering linearly and  without derivatives into
the right hand sides of the differential equations for $x$'s. We
call the set of equations (\ref{lambda}), (\ref{Constr1}) the
\textit{primary normal form} of local dynamics. Obviously, the
compatibility conditions between the differential equations
(\ref{lambda}) and the algebraic constraints (\ref{Constr1}) can
result in more algebraic equations, which are called the secondary
constraints. The differential consequences of equations
(\ref{lambda}), (\ref{Constr1}) are considered in Section 3.

Two particular classes of equations  (\ref{lambda}),
(\ref{Constr1}) were extensively studied  by two different
sciences: Dirac's analysis of the constrained Hamiltonian systems
and optimal control theory. The first one deals with the case
where the number of the constraints coincides with the number of
$\lambda$'s, i.e.,  $l=m$, the vector fields $Z_\alpha$ and $V$
are all Hamiltonian, and
\begin{equation}\label{Dirac}
    Z^i_a = \{ x^i \, , \, T_a \} \, , \qquad V^i = \{
    x^i \, , \, H \} \, .
\end{equation}
Here $\{\cdot \,, \cdot \}$ is a non-degenerate Poisson bracket
and $T_a$ are the algebraic constraints (\ref{Constr1}). Upon this
identification, the equations of motion become variational,
following from the least action principle for the functional
\begin{equation}\label{SDirac}
    S[x,\lambda] = \int \left( \rho_i(x) \dot{x}{}^i -H(x) -
    \lambda^aT_a(x)\right)dt \, ,
\end{equation}
where $\rho_i dx^i$ is a symplectic potential associated to  the
Poisson bracket. This can be understood as a conditional extremum
for the standard Hamiltonian action subject to the constraints
(\ref{Constr1}), with the variables $\lambda$ being the
corresponding Lagrange multipliers. It is the variational dynamics
that the Dirac-Bergmann algorithm is normally applied to. If,
after applying the algorithm, the theory appears containing the
first-class constraints, the solutions are not unique for the
Cauchy problem and the action (\ref{SDirac}) possesses gauge
symmetries.

Optimal control theory deals with another particular case of the
above equations, which is opposite, in a sense, to the variational
case: only the differential equations (\ref{lambda}) are
considered, with no constraints (\ref{Constr1}) imposed on $x$'s.
In optimal control \cite{CMP}, \cite{SA},  the $\lambda$'s are
called the control functions. If only  equations (\ref{lambda})
are considered, the Cauchy problem for $x$'s will have a unique
solution corresponding to every choice of the functions
$\lambda^\alpha (t)$. The control functions remain unrestricted
anyhow by equations (\ref{lambda}), then the solutions remain
ambiguous for $x^i$ unless all $\lambda$'s are determined by some
extra requirement. The distinctions between Dirac's constrained
dynamics and optimal control theory go far beyond imposing (or not
imposing) constraints and supposing (or not supposing) the
existence of a symplectic structure: they profess quite opposite
concepts of the ambiguities in solutions of equations
(\ref{lambda}). Optimal control theory considers the functions
$\lambda$ as describing a background for the dynamics, making it,
in fact, non-autonomous. The objective is to minimize certain
function(al)s of the solutions $x^i(t)$ (the cost functions) by
varying the control functions $\lambda (t)$. Quite opposite, the
Dirac constrained dynamics consider all the solutions for $x(t)$
and $\lambda (t)$ equivalent to each other whenever they
correspond to the same Cauchy data. In particular, a function(al)
is considered as physically observable, if it evolves in the same
way on every trajectory from the equivalence class, i.e., it is
out of control from the viewpoint of optimal control theory. The
gauge theory treats $\lambda$ as dynamical variables, not the
functions controlled from outside. From this viewpoint, the
controllable values are supposed to be unobservable because the
autonomous physical models must be self-contained, leaving no room
for intervention of any ultramundane force that can
control/optimize what we observe. With this regard, the ambiguity
brought to the solutions by the undetermined multipliers $\lambda
(t)$ is to be factored out from the dynamics, first classically
and then quantum-mechanically. A basic tool for such a
factorization is a gauge symmetry relating equivalent
trajectories. In particular, the physical observables are
identified with the gauge invariant function(al)s.

By a gauge symmetry of equations (\ref{lambda}), (\ref{Constr1})
we understand the infinitesimal transformations of the form
\begin{equation}\label{gt}
    \delta_\varepsilon x^i = \sum_{n=0}^p R_{_{(p-n)}}^i
    \stackrel{_{(n)}}{\varepsilon}{} \, , \qquad \delta_\varepsilon
    \lambda^\alpha =  \sum_{n=0}^{p+1}U_{_{(p+1-n)}}^\alpha
    \stackrel{_{(n)}}{\varepsilon}\, .
\end{equation}
The transformation parameters  $\varepsilon$ are arbitrary
functions of time, and $\stackrel{_{(n)}}{\varepsilon}{}$ stands
for the $n$-th order time derivative of $\varepsilon{}$. The
structure functions $R^i_{^{(n)}}$ and $U^\alpha_{^{(n)}}$ are
allowed to depend on a finite number of variables $x^i$,
$\lambda^\alpha$, $\dot\lambda {}^{\alpha},\ddot\lambda
{}^{\alpha},...$ . The transformation (\ref{gt}) is supposed to
leave equations (\ref{lambda}), (\ref{Constr1}) invariant in the
sense that the  variations
\begin{equation}\label{deltae}
\delta_\varepsilon \left( \dot{x}{}^i -V^i(x) -
Z^i_\alpha(x)\lambda^\alpha \right)\,,\qquad
\delta_{\varepsilon}T_a(x)
\end{equation}
must be  proportional to the equations of motion (\ref{lambda}),
(\ref{Constr1}) for arbitrary $\varepsilon (t)$. In what follows
we use the ordinary physical terminology and refer to the values
that are proportional to the equations of motion as vanishing
\textit{on-shell}.

Consider, for example, unconstrained equations (\ref{lambda}) and
suppose that the vector fields $Z_\alpha$ span an integrable
distribution of rank $m$,
\begin{equation}\label{inv}
    [Z_\alpha \, , \, Z_\beta ]= U_{\alpha\beta}^\gamma (x)
    Z_\gamma \, ,
\end{equation}
which is preserved by $V$,
\begin{equation}\label{Vinv}
 [Z_\alpha \, , \, V ]= V_\alpha^\beta (x) Z_\beta \, .
\end{equation}
Then there are exactly $m$ linearly independent gauge
transformations of the form (\ref{gt}). These read
\begin{equation}\label{gt0}
\delta_\varepsilon x^i = Z_\alpha^i \varepsilon^\alpha \, , \qquad
\delta_\varepsilon \lambda^\alpha = \dot{\varepsilon}{}^\alpha
-\left( V^\alpha_\beta + U_{\beta\gamma}^\alpha\lambda^\gamma
\right)\varepsilon^\beta \, .
\end{equation}
Notice that equations (\ref{lambda}) are fully consistent, no
mater what are the vector fields $V$ and $Z_\alpha$ involved in
the right hand side. The involution relations (\ref{inv}),
(\ref{Vinv}) are not the necessary conditions for consistency of
the equations. If the distribution
$\mathcal{Z}=\mathrm{span}\{Z_\alpha\}$ is not integrable and/or
not preserved by $V$, the explicit form of the gauge
transformation (\ref{gt}) has remained unknown yet, and we are
going to present it in Section 4.

Given the distinctions in the main concepts/objectives of the
gauge theory and optimal control, it is no surprise that the gauge
symmetry remained unstudied for the general dynamical equations
(\ref{lambda}), (\ref{Constr1}) in the context of optimal control.
Even though a lot of interesting non-variational gauge models
arise in the context of modern higher energy physics, e.g.
self-dual Yang-Mills fields, higher-spin gauge theories,
Sieberg-Witten and Donaldson-Uhlenbeck-Yau equations, etc, the
issue of extending the Dirac-Bergmann method to the general local
dynamics  has been shelved yet, because no ways were seen to
quantize the non-variational dynamics. Recently, we have proposed
the BRST quantization methods applicable to arbitrary
non-variational dynamics \cite{LS1} (see also \cite{CaFe} for
another version of this method). Our method implies the
characteristic distribution to be on-shell involutive and tangent
to the constraint surface. As it has been already explained, this
is not the most general case. In Section 5, we explain that any
local gauge dynamics can be eventually reformulated in the
involutive form, which is physically equivalent to the original
one. The involutive normal form of dynamics, resulting from the
extension of the Dirac-Bergman algorithm to non-variational
dynamics, thus becomes a starting point for applying the
deformation quantization techniques for the general dynamical
systems.

\section{Regularity conditions and the primary normal form of  local dynamics}

Let us start with explanation how to bring a local dynamical
system  to the normal form (\ref{lambda}), (\ref{Constr1}). Any
finite system of ordinary differential equations  can always be
depressed to the first order by introducing new variables. By
further adding new variables, if necessary, the first-order ODEs
can be equivalently rewritten in the form of inhomogeneous
Pfaffian equations
\begin{equation}\label{Pf1}
    \theta_{J i}(x)\dot{x}{}^i=V_J(x) \, , \qquad J=1, \ldots , N \,,\quad i= 1, \ldots ,
    n\,,
\end{equation}
subject to constraints
\begin{equation}\label{C1}
T_a(x)=0\, , \qquad a=1, \ldots , m \, .
\end{equation}
For example, given second-order equations
\begin{equation}\label{f}
    f(y, \dot{y}, \ddot y) = 0 \, ,
\end{equation}
by adding new variables $v$ and $w$, these can be equivalently
rewritten in the form (\ref{Pf1}), (\ref{C1}) as
\begin{equation}\label{depress}
   \dot v=w\,,\qquad \dot{y}=v \, , \qquad f(y,v,w)=0 \, .
\end{equation}
Sometimes, to take into account the geometric properties of the
original equations (\ref{f}), it might be more appropriate to
introduce the new variables $v$, absorbing the derivatives of $x$,
as linear inhomogeneous functions of $\dot x$, $v=e(x){\dot x}+
\Gamma(x)$. This will result, however, in the equations of the
form (\ref{Pf1}, \ref{C1}), anyway. We call the \textit{primary
constraints} or just \textit{constraints} both equations
(\ref{C1}) and the functions $T_a(x)$.

To further proceed with the general equations of motion in the
constrained Pfaffian form (\ref{Pf1}), (\ref{C1}), we need to
impose appropriate regularity conditions. The phase space of the
dynamical system is supposed to be an open domain $U$ in
$\mathbb{R}^n$ with linear coordinates $\{x^i\}$. All the
functions of $x^i$ entering equations (\ref{Pf1}), (\ref{C1}) are
supposed to be analytic. \footnote{The conditions we adopt here
are not the weakest possible ones. For example, one can extend the
consideration from $\mathbb{R}^n$ to an analytic or smooth
manifold $M$, with constraints $\{T_a(x)\}$ being a section of a
vector bundle over $M$. In this paper, however, we are going to
avoid the technicalities related to less restrictive regularity
conditions, for the sake of a more clear presentation of the
algorithm as such. }

Let us introduce some convenient notions for describing the
regularity of the equations. Let $M$ be a matrix whose  elements
are analytic functions on $\mathbb{R}^n$. Given a  subset
$U\subset \mathbb{R}^n$, one can define a sequence of embedded
subspaces
\begin{equation}\label{U}
    U=U^0\supset U_M^1\supset U_M^2\supset\cdots\supset
    U_M^m=U'_M\,,
\end{equation}
where
\begin{equation}\label{Um}
    U_M^r=\{x\in U|\;\mathrm{rank}M(x)\geq r\}
\end{equation}
is the  set of all points in which the rank of the matrix $M$ is
not less than $r$, and $m$ is the maximal value of $\mathrm{rank}
M$ on $U$. Then $U_M^r$ is an open everywhere dense subset of $U$.
A minimal subset $U'_M\subset U$ will be called a \textit{a
regular part of $U$ regarding $M$}. It will be denoted just by
$U'$ whenever the corresponding matrix-valued function $M$ is
obvious from the context. The complementary subset to the regular
part, $U\backslash U'$, will be called an \textit{abnormal part}
of $U$.

Denote by $\Sigma$ the common zero locus of the constraints
(\ref{C1}), i.e.,
\begin{equation}\label{Sigma0}
    \Sigma=\{x\in U| \; T_a(x)=0 \, ,
    a=1,...,m\}\,.
\end{equation}
The primary constraints are assumed to be consistent as algebraic
equations per se. If the constraints contradict each other, the
entire system of the dynamical equations (\ref{lambda}),
(\ref{Constr1}) does not have any solution, and this can  hardly
be considered as regular dynamics. The same consistency of the
constraints by themselves will always be assumed  for the
secondary constraints derived in the next section.

Let $U'$ be the regular part of $U$ regarding the Jacobi matrix of
the constraints $J=(\partial_iT_a)$. The constraints $\{T_a\}$ are
called \textit{regular} if
\begin{equation}\label{Siprime}
\Sigma'=U'\cap\Sigma\neq \emptyset.
\end{equation}
In this case, $\Sigma'$ is a smooth submanifold in $U'$ with
$\dim{\Sigma}' = n-\mathrm{rank}J|_{U'}$.

 Denote  by $\Sigma''$
the  regular part of $\Sigma'$ regarding the matrix
$\Theta=(\theta_{Ji})$ of the Pfaffian forms in (\ref{Pf1}).  In
accordance with the definitions above, $\Sigma''\subset
\widetilde{U}=(U')^r_{\Theta}$, where $r=\mathrm{rank}\Theta(p)$
for $p\in \Sigma''$.  Equations (\ref{Pf1}) are linear in
velocities $\dot x^i$, with the coefficient matrix $(\theta_{Ji})$
having rank greater than or equal to $r$ on $\widetilde{U}$.
Restricting the dynamics to $\widetilde{U}$, we can solve
(\ref{Pf1}) with respect to the velocities as
\begin{equation}\label{l1}
    \dot x^i=V^i(x)+\lambda^\alpha Z_\alpha^i(x)\,,\qquad
    \alpha=1,...,l,\qquad l=n-r\,.
\end{equation}
Here $\lambda^\alpha(t)$ are arbitrary functions of time, the
vector field $V$ is any solution to the inhomogeneous linear
equations
\begin{equation}\label{Vi}
 (\theta_{Ji} V^i- V_j )|_{\Sigma^{\prime\prime}}=0\,,
\end{equation}
and the vector fields $\{Z_\alpha\}$ span the space of solutions
to the corresponding linear system
\begin{equation}\label{Z}
    \theta_{Ji}Z^i|_{\Sigma^{\prime\prime}}=0 \,.
\end{equation}
Upon restriction to the domain $\widetilde{U}$, the system of
equations  (\ref{l1}) and (\ref{C1}) is completely equivalent to
the original equations (\ref{Pf1}) and (\ref{C1}). We refer to
equations (\ref{l1}), (\ref{C1}) as a \textit{primary normal form}
of local dynamics. It is the form (\ref{lambda}), (\ref{Constr1})
announced in Introduction. The vector field $V$ will be called a
\textit{primary drift}, and the vector distribution
$\mathcal{Z}=\mathrm{span}\{Z_\alpha\}$ will be called a \textit{
primary characteristic distribution}. Notice that the distribution
$\mathcal{Z}$ is not required to have a constant rank on
$\widetilde{U}$ nor is it supposed to be involutive.

The primary normal form of equations (\ref{lambda}),
(\ref{Constr1}) implies certain equivalence relations among their
ingredients. Besides  the non-degenerate changes of the decision
variables
\begin{equation}\label{change}
    x^i \mapsto x'{}^i=f^i(x)\,,\qquad \lambda^\alpha \mapsto
    \lambda'{}^\alpha=F^\alpha_\beta(x)\lambda^\beta+G^\alpha(x)\,,
\end{equation}
the system is invariant with respect to the following
redefinitions:
\begin{equation}\label{aut}
    T_a \mapsto T'_a=S_a^b T_b\,, \qquad V\mapsto V'=V +
    T_aW^a\,,\qquad Z_\alpha\mapsto Z'_\alpha=Z_\alpha +
    T_aX_\alpha^a\,,
\end{equation}
where $(S_a^b(x))$ is an arbitrary non-degenerate matrix on
$\widetilde{U}$, and $W^a$ and $X^a_\alpha$ are arbitrary vector
fields. So, neither the primary drift $V$, nor the characteristic
distribution $\mathcal{Z}$ is defined uniquely by the
inhomogeneous Pfaffian system (\ref{Pf1}) subject to the
constraints (\ref{C1}). We say that two  dynamical systems on
$\widetilde{U}$ are \textit{equivalent} if their primary normal
forms (\ref{lambda}), (\ref{Constr1}) are related by
transformations (\ref{change}), (\ref{aut}). In what follows, we
will deal with those dynamical systems which admit at least one
analytic representative in each equivalence class, and only the
analytic representations will be considered.

A function (or vector field) $F$ is called trivial if it vanishes
on $\Sigma^\prime $. With account of regularity we have
\begin{equation}\label{triv}
 F|_{\Sigma'}=0 \quad \Leftrightarrow\quad   F= T_aF^a
\end{equation}
for some smooth functions (or vector fields) $F^a$. From this
viewpoint, the automorphisms (\ref{change}), (\ref{aut}) imply
that two characteristic distributions are equivalent whenever
their difference is trivial, and two drifts are equivalent
whenever they coincide modulo the characteristic distribution and
a trivial vector field.

As it has been already mentioned in Introduction, the general
equations of motion, being brought to the primary normal form
(\ref{lambda}), (\ref{Constr1}), can have nontrivial compatibility
conditions resulting in additional constraints on $x$'s and
$\lambda$'s. The secondary constraints on $x$'s can be derived
directly in the inhomogeneous Pfaffian form (\ref{Pf1}),
(\ref{C1}), as it was shown by X. Gracia and J.M. Pons \cite{GP}.
In the next section, we consider the procedure of deriving the
secondary constraints and fixing the undetermined multipliers
$\lambda$ by making use of the primary normal form of the
equations of motion. This procedure allows us to extend the Dirac
classification of constraints to the general equations
(\ref{lambda}), (\ref{Constr1}), without appealing to any Poisson
structure.

\section{Consistency conditions and secondary constraints}
The consistency conditions of the primary normal equations
originate from the requirement that the integral trajectories of
the flow (\ref{lambda}), with fixed functions $\lambda (t)$,
should be confined at the regular part of the constraint surface
(\ref{Siprime}). For the regular constraints this means that the
time derivatives of all the constraints must vanish on the
integral trajectories whenever they intersect $\Sigma^\prime$:
\begin{equation}\label{dotT}
\dot{T}_a{}|_{\mathrm{on-shell}}=\left( VT_a+\lambda^\alpha
Z_\alpha T_a \right)|_{ \Sigma^\prime } = 0 \, .
\end{equation}
The expression in brackets being a trivial function in the sense
of (\ref{triv}), we arrive at the following linear inhomogeneous
equations for $\lambda$:
\begin{equation}\label{dT}
 VT_a+ \lambda^\alpha Z_\alpha T_a =F_a^bT_b \,.
\end{equation}
Let $\Sigma''$ be the regular part of $\Sigma'$ regarding the
matrix $M=(Z_\alpha T_b)$, and $\mathrm{rank}M(p)=s$ for $ p\in
\Sigma''$. Then, according to the definition (\ref{Um}),
$\Sigma''\subset (U')_M^s$.

To further proceed with solving equations (\ref{dT}), we decompose
the constraints, characteristic distribution  and undetermined
multipliers in the following way:
\begin{equation}\label{decomp}
 T_a= (T_A,T_{{\bar{a}_1}})\,,\qquad
 Z_\alpha=(Z_A,Z_{{\alpha}_1})\,,\qquad
 \lambda^\alpha=(\lambda^A,\lambda^{{\alpha}_1})\,,
\end{equation}
where $A=1,...,s$ and $D=(Z_AT_B)$ is a maximal non-degenerate
minor of $M$. Denote $\stackrel{_{(1)}}{U}$ the regular part of
$(U')_M^s$ regarding $D$. Using equations (\ref{dT}) with $a=A$,
we can express $\lambda^A$ as linear inhomogeneous functions of
$\lambda^{\alpha_1}$:
\begin{equation}\label{lA}
    \lambda^A= - D^{AB}(VT_B
    +\lambda^{{\alpha_1}}Z_{{\alpha_1}}T_B)\,,
\end{equation}
with the matrix $(D^{AB})$ being inverse to $D$. After
substitution of the determined multipliers (\ref{lA}) into the
remaining equations (\ref{dT}), we get
\begin{equation}\label{Tprime}
    {T}^1_{\bar{a}_1}+\lambda^{\alpha_1}
    B_{{\alpha_1}{\bar{a}_1}}=0\,,
\end{equation}
where
\begin{equation}\label{B}
{T}^1_{\bar{a}_1}=VT_{\bar{a}_1}
-D^{AB}(VT_B)Z_AT_{\bar{a}_1}\,,\qquad
B_{{\alpha_1}{\bar{a}_1}}=Z_{\alpha_1}T_{
\bar{a}_1}-D^{AB}(Z_{\alpha_1}T_B)Z_AT_{\bar{a}_1}\,.
\end{equation}
Obviously, the matrix $B=(B_{{\alpha_1}{\bar{a}_1}})$ must vanish
on $\Sigma^\prime$, since otherwise one could determine more
multipliers $\lambda$, that would contradict maximality of the
non-degenerate minor $D$. Being a trivial analytic function in
$\stackrel{_{(1)}}{U}$,  the matrix $B$ has the form
$B_{{\alpha_1}{\bar{a}_1}}=W_{{\alpha_1}{\bar{a}_1}}^cT_c$.
Equations (\ref{Tprime}) should be satisfied together with the
equations of primary constraints, hence the trivial terms
containing $\lambda$'s can be omitted and we obtain the new
constraints
\begin{equation}\label{C2}
{T}^1_{\bar{a}_1} (x) = 0 \, .
\end{equation}
So, we see that the conservation laws (\ref{dotT}) for the primary
constraints are equivalent to equations (\ref{lA}) determining a
part of $\lambda$'s as specific functions of $x$'s, and the
$\lambda$-independent relations (\ref{C2}). Equations (\ref{C2})
as well as the functions ${T}^1_{\bar{a}_1}$ themselves  are
called the \textit{secondary constraints of the first stage}.
Notice that the secondary constraints are defined modulo trivial
contributions proportional to the primary constraints. Also notice
that the functions ${T}^1(x)$ are not necessarily independent.

Substituting the fixed multipliers (\ref{lA}) into the original
equations of motion  (\ref{lambda}), (\ref{Constr1}) and adding
the secondary constraints,  we arrive at the following set of
equations:
\begin{equation}\label{secondary}
    \dot x^i= \stackrel{_{(1)}}{V}{}^i(x)+\lambda^{{\alpha_1}}
    \stackrel{_{(1)}}{Z}{}_{{\alpha}_1}^i(x)\,,\qquad \stackrel{_{(1)}}{T}_{a_1}(x)=0\,,
\end{equation}
where the values
\begin{equation}\label{VZT1}
    \stackrel{_{(1)}}{V}=V - D^{AB}(VT_B)Z_A\,,\qquad
    \stackrel{_{(1)}}{Z}_{{\alpha}_1}=Z_{
    \alpha_1}-D^{AB}(Z_{{\alpha}_1}T_B)Z_A\,,\qquad
    \stackrel{_{(1)}}{T}_{a_1}=(T_a, T_{\bar{a}_1}^1)\,.
\end{equation}
are called, correspondingly, the drift, characteristic
distribution, and constraints of the first stage. By construction,
all these objects are well defined on the open everywhere dense
domain $\stackrel{_{(1)}}{U}\subset U$.

Let us write down  the  following obvious relations:
\begin{equation}\label{stage1}
\stackrel{_{(1)}}{V}T_A=0\,,   \qquad
\stackrel{_{(1)}}{V}T_{{\bar{a}_1}} = {T}^1_{{\bar{a}_1}}\,,\qquad
\stackrel{_{(1)}}{Z}_{\alpha_1}T_A=0\,, \qquad
\stackrel{_{(1)}}{Z}_{\alpha_1}T_{a_1}=W_{\alpha_1 a_1}^c T_c \, .
\end{equation}

Along with the condition of invertibility of the matrix
$D=(Z_AT_B)$ these relations give a tip for a simple geometric
interpretation of the transition from the original equations
(\ref{lambda}), (\ref{Constr1}) to the equivalent first-stage
equations (\ref{secondary}). The original surface of primary
constraints (\ref{Sigma0}) can be represented (at least locally)
as a transverse intersection $\Sigma= \Sigma^{^{\|}}\cap
\Sigma^{^{\bot}}$ of two surfaces $\Sigma^{^{\|}}$ and
$\Sigma^{^{\bot}}$. These are given by
\begin{equation}\label{decompS}
    \Sigma^{^{\|}}=\{x\in\, \stackrel{_{(1)}}{U}|\; T_{a_1}(x)=0\,\}\,,\qquad
    \Sigma^{^{\bot}}=\{x\in\, \stackrel{_{(1)}}{U}|\; T_A(x)=0\,\}\,.
\end{equation}
In its turn, the original characteristic distribution
$\mathcal{Z}=\mathrm{span}\{Z_\alpha\}$ is decomposed into the sum
 $\mathcal{Z}=\mathcal{Z}^{^{\|}}\oplus
\mathcal{Z}^{^{\bot}}$ of two sub-distributions
\begin{equation}\label{Z|bot}
\mathcal{Z}^{^{\|}}=\mathrm{span}\{\stackrel{_{(1)}}{Z}_{\alpha_1}\}\,,\qquad
\mathcal{Z}^{^{\bot}}=\mathrm{span}\{Z_A\}\, ,
\end{equation}
which are called, correspondingly, the tangential and transverse.
The multipliers $\lambda^A$, being related to the transverse
sub-distribution, are fixed by the conservation law of the primary
constraints, whereas the multipliers $\lambda^{\alpha_1}$
corresponding to $\mathcal{Z}^{^\|}$ still remain undetermined.

Geometrically, relations (\ref{stage1}) mean that the distribution
$\mathcal{Z}^{^{\|}}$ is tangent to the regular part of the
primary constraint surface $\Sigma$, whereas the complementary
distribution $\mathcal{Z}^{^{\bot}}$ is transverse. The tangential
distribution $\mathcal{Z}^{^{\|}}\subset \mathcal{Z}$ is
invariantly defined by the property
$\mathcal{Z}^{^{\|}}|_{\Sigma}\subset T\Sigma$. The dimension of
the transverse distribution, being complementary to the dimension
of $\mathcal{Z}^{^{\|}}$, coincides with the number of the
multipliers (\ref{lA}) determined from the conservation
requirement for the primary constraints. The primary
characteristic distribution $\mathcal{Z}$ is tangent to the
surface $\Sigma^{^{\|}}$, being zero locus of the primary
constraints not involved in determining of the multipliers. Also
notice that the first-stage drift $\stackrel{_{(1)}}{V}$ is
tangent to the primary constraint surface $\Sigma$ on the zero
locus of the secondary constraints.

Not only do the first-stage equations (\ref{secondary})  describe
the same dynamics as (\ref{lambda}), (\ref{Constr1}) (the
zero-stage equations), but they appear identical to equations
(\ref{lambda}), (\ref{Constr1}) in form. Therefore, we can apply
to (\ref{secondary}) all the above reasonings concerning the
preservation of the constraints in time. Supposing the first-stage
constraints to be regular, we check whether the conservation
condition
\begin{equation}\label{}
\begin{array}{c}
\frac{d}{dt}\stackrel{_{(1)}}{T}_{a_1}\big|_{\mathrm{on-shell}}=0
\end{array}
\end{equation}
implies new secondary constraints, or it further restricts the
undetermined multipliers $\lambda$. The new constraints, if any,
are called the secondary constraints of the second stage and
denoted by ${T}^2_{\bar{a}_2}$. We suppose that the complete set
of the second-stage constraints $\stackrel{_{(2)}}{T}_{a_2}=(T_a,
T^1_{\bar a_1}, T^2_{\bar a_2})$ is regular.  After exclusion of
the determined  multipliers from the differential equations, we
 get  the drift and the characteristic distribution of the second
 stage. If there are nontrivial secondary constraints among the constraints of the second stage
they must conserve. So, we further proceed with deriving
consequences from the conservation of the second-stage
constraints. The algorithm will continue working, defining some
multipliers and redistributing the corresponding vector fields
from the tangential distribution of the previous stage to the
transverse distribution of the next stage, redefining the drift,
and bringing the constraints of the next stage. Since the overall
number of the independent regular constraints at any stage cannot
exceed the dimension of the phase space, the algorithm has to
stabilize after a finite number of steps. The stabilization is
achieved as soon as new secondary constraints stop to appear from
the consistency conditions. If the iterative procedure terminates
at the $k$-th stage, the equations of motion take the following
form:
\begin{equation}\label{im}
    \dot x^i= \stackrel{_{(k)}}{V}{}^i(x)+\lambda^{{\alpha_k}}
    \stackrel{_{(k)}}{Z}{}_{{\alpha}_k}^i(x)\,,\qquad \stackrel{_{(k)}}{T}_{a_k}=0\,.
\end{equation}
These equations are defined in an open everywhere dense domain
$\stackrel{_{(k)}}{U}\subset U$, and the complete set of
constraints
\begin{equation}\label{barS}
{\stackrel{_{(k)}}{T}_{a_k}} =\left( T_a, {T}^1_{\bar{a}_1},
\ldots , T^k_{\bar{a}_k} \right)
\end{equation}
includes the primary and secondary constraints of all stages.

By construction, the following relations take place:
\begin{equation}\label{VZim}
\stackrel{_{(k)}}{V} \stackrel{_{(k)}}{T}_{a_k} = F_{a_k}^{
b_k}\stackrel{_{(k)}}{T}_{b_k}\,, \qquad
\stackrel{_{(k)}}{Z}_{\alpha_k} \stackrel{_{(k)}}{T}_{a_k} =
F_{\alpha_k {a_k}}^{b_k} \stackrel{_{(k)}}{T}_{b_k}
\end{equation}
for some functions $F_{a_k}^{b_k}$  and $F_{\alpha_k
{a_k}}^{b_k}$. We call  (\ref{im}) a \textit{complete normal form}
of a local dynamical system. It is the form that ensures full
consistency of the dynamics as it does not have any further
compatibility conditions and consequences.

Several remarks can be made about the algorithm above:
\begin{itemize}
    \item The $(n+1)$-st step of the algorithm becomes possible whenever
    the constraints of $n$-th stage are regular. Actually, it is the constraint surface that
     has an invariant geometric meaning, not the constraints as functions.
    The same constraint surface  can be defined by different $T$'s, and
    all such constraints are considered to be equivalent. If the
    equivalence class includes a regular representative,
    this representative is picked up as the set of constraints of the $n$-th stage.
    As is seen from the iterative procedure, the algorithm is not sensitive
    to the specific choice of regular representative.

    \item The $(n+1)$-st iteration involves the ambiguity concerning the
    choice of a maximal non-degenerate minor $D_n$ of the matrix
    $M_n=(Z_{\alpha_n}\stackrel{_{(n)}}{T}_{a_n})$. The different
    choices can result in different domains of definition  $\stackrel{_{(n+1)}}{U}$
    for the system of the $(n+1)$-st stage. Notice that the different domains $\stackrel{_{(n+1)}}{U}$'s,
  being open everywhere dense subsets in $U$, should coincide modulo a set of measure zero.
  In the intersection of these almost coinciding domains of definition, the
  equations of motion are not sensitive to a specific choice of
  a maximal minor.

\item When the compatibility conditions (\ref{dT}) are considered
as a system of linear equations for $\lambda$'s, the surface
$\Sigma''$ was defined as  a subset of $\Sigma'$, where
$\mathrm{rank}\,M=s$ is maximal.
   This requirement of maximality can be relaxed, in principle. One can consider
   a subset $S\subset \Sigma'$, where $\mathrm{rank}\,M=s'$ is less than $s$.
   According to the terminology of Section 2, $S$ belongs to the abnormal part of the
   constraint surface $\Sigma'$ regarding $M$. In some cases, the abnormal
   set $S\subset U'$ can be a smooth submanifold defined by a set of regular constraints $T^S=0$.
   At every given stage, the dimension of the abnormal part of the constraint
surface is less than the dimension of the regular part, so that
$\dim S<\dim \Sigma$. Being restricted to $S$, the dynamics would
be regular in the same sense as with the restriction to the
regular part of the constraint surface. With the restriction of
the dynamics to the abnormal set $S$, a lesser number $s'$ of the
multipliers $\lambda$ are determined, but the greater number of
the constraints appear at this stage. Thus, at any stage of the
iterative procedure, there may be an ambiguity in deciding between
the regular or abnormal parts of the constraint surface. The
dynamics can be regular and consistent for either of these
options. It is amply clear that the dimension of the complete
constraint surface (\ref{barS}) and the number of undetermined
multipliers in (\ref{im}) can depend on the choice made at certain
stage (the algorithm ``bifurcates''). So, after applying the
algorithm, which works equally well with regular and abnormal
parts of the constraint surface, one can learn that the original
dynamical system (\ref{lambda}), (\ref{Constr1}) has contained in
itself several different dynamics (\ref{im}), with different
constraints, drifts and numbers of undetermined multipliers. These
dynamics never communicate and should be interpreted as different
physical systems, even though the difference reveals itself only
after applying the algorithm, being not explicitly visible in the
primary normal form of the dynamics.
\end{itemize}

Let us comment on the geometry of the dynamical equations
 in the complete normal form (\ref{im}).  The primary
characteristic distribution $\mathcal{Z}=\mathrm{span}\{Z_a\}$ is
eventually decomposed into the direct sum
\begin{equation}\label{Zdecomp}
    \mathcal{Z}=\mathcal{Z}_{^{\bot}}\oplus\mathcal{Z}_{^{\|}}
\end{equation}
of the transverse and tangential distributions with respect to the
complete constraint surface
\begin{equation}\label{comp}
    \bar\Sigma = \{x\in \stackrel{_{(k)}}{U}\;|\;
    \stackrel{_{(k)}}{T}_{a_k}=0\}\,.
\end{equation}
As is seen from (\ref{VZim}), the tangential distribution
$\mathcal{Z}_{^{\|}}=\mathrm{span}\{\stackrel{_{(k)}}{Z}_{\alpha_k}\}$
preserves the complete constraints surface  (\ref{comp}), and the
multipliers $\lambda^{\alpha_k}$ related to the basis vector
fields of $\mathcal{Z}_{^{\|}}$ remain arbitrary functions of time
in equations (\ref{im}). The distribution
$\mathcal{Z}_{^\bot}\simeq \mathcal{Z}/\mathcal{Z}_{^\|}$, being a
complement to $\mathcal{Z}_{^\|}$, is transverse to the complete
constraint surface ${\bar\Sigma}$: If $ \{ Z_{{\bar A}}\}$ is a
basis in $\mathcal{Z}_{^\bot}$, then
\begin{equation}\label{Zbot}
\mathrm{rank} (Z_{ {\bar A}} \stackrel{_{(k)}}{T}_{{a}_k}) = \dim
\mathcal{Z}_{^\bot} \,.
\end{equation}
 All the multipliers $\lambda^{\bar A}$ corresponding to
$\mathcal{Z}_{^\bot}$ are defined by the conservation conditions
of the primary or secondary constraints, at one or another stage
of the algorithm.

By making use of the equivalence transformations (\ref{aut}), the
complete set of constraints can be rearranged into the union of
two subsets
\begin{equation}\label{completeT|}
\stackrel{_{(k)}}{T}_{a_k}=( T^{_\|}_{\bar a} \, , \,
{T}^{^\bot}_{\bar A} )
\end{equation}
such that
\begin{equation}\label{}
\quad Z\, \bar{T}^{_\|}_{\bar a}|_{\bar{\Sigma}}=0  \quad \forall
Z\in \mathcal{Z} \quad \mbox{and}\quad
    \det D|_{\bar{\Sigma}}\neq
    0\, ,
    \end{equation}
where $D =(Z_{\bar{A}}{T}^{^\bot}_{\bar{B}})$. Denoting by
$\bar\Sigma_{^\|}$ and $\bar \Sigma_{^\bot}$ the zero loci of the
constraints $T^{^\|}$ and $T^{^\bot}$, respectively, we see that
the complete constraint surface  can be represented as the
intersection of two surfaces:
\begin{equation}\label{barSigmabot|}
\bar{\Sigma}=\bar{\Sigma}_{^\|}\cap\bar{\Sigma}_{^\bot} \, .
\end{equation}
The primary characteristic distribution $\mathcal{Z}$ is tangent
to $\bar{\Sigma}_{^\|}$ on $\bar{\Sigma}$ and transverse to
$\bar{\Sigma}_{^\bot}$.  Notice that
$\mathrm{codim}\bar{\Sigma}_{^\bot} =\mathrm{dim}
\mathcal{Z}_{^\bot}$, while the dimension of the tangential
distribution $\mathcal{Z}_{^\|}$ does not correlate with the
dimension of $\bar{\Sigma}_{^\|}$ in general.

The complete drift  $\bar V=\stackrel{_{(k)}}{V} $ defined by
(\ref{im}) is constructed step by step from the primary drift $V$
by adding terms proportional to the elements of the transverse
distribution. When the algorithm has terminated at the $k$-th
stage, the drift reads
\begin{equation}\label{barV}
    {\bar V} = V -
    (V{T}^{^\bot}_{{\bar A}})(D^{-1})^{{\bar A} {\bar B}}{Z}_{{\bar B}}
    \, .
\end{equation}
Notice that $ {\bar V}$ belongs to the equivalence class of the
primary drift $V$ in the sense of the equivalence relations
(\ref{aut}).

Let us elaborate on the correspondence between the basic notions
of the constrained Hamiltonian mechanics and their counterparts in
the  general local dynamics. The equations (\ref{im}) can be
viewed as an equivalent of the Dirac constrained dynamics with the
complete set of primary and secondary constraints (\ref{barS}).
The ``tangential'' and ``transverse'' constraints ${T}^{^\|}$ and
$T^{^\bot}$   correspond, respectively, to the {\it complete} sets
of the first- and second-class constraints in Dirac's
classification of the Hamiltonian constraints. The tangential and
transverse primary characteristic distributions
$\mathcal{Z}_{^\|}$ and $\mathcal{Z}_{^\bot}$ correspond to the
distributions generated by the Hamiltonian vector fields of the
{\it primary} first- and second-class constraints, respectively.
One may also wonder about an analog for the Hamiltonian vector
fields associated with the first-class constraints. These are
known to generate the whole set of gauge symmetries provided that
the Hamiltonian system obeys the Dirac conjecture. From the next
section we will learn that the vector fields in question form the
distribution
\begin{equation}\label{ZViter}
{\mathcal{Z}}_V = \mathcal{Z}_{^\|}\cup [ \mathcal{Z}_{^\|},
\mathcal{Z}_{^\|} ]\cup [\mathcal{Z}_{^\|}, {\bar V}] \cup
\cdots\,,
\end{equation}
where the dots stand for the higher iterated commutators of
$\mathcal{Z}_{^\|}$ and $\bar V$. Whether  the Dirac conjecture is
true or not, it is the distribution ${\mathcal{Z}}_V$ that
generates the gauge symmetry transformations (\ref{gt}) of the
dynamics.

\section{Gauge symmetries}

In the previous section, we have seen that the algorithm of
stabilization of the primary constraints brings the original
equations of motion to the complete normal form (\ref{im}),
(\ref{VZim}). In this section, we find all the gauge symmetry
transformations for these equations.

To ease the notation, we omit all the sub- and superscripts
referring to the final stage of the iterative procedure from the
previous section. After this omission, equations (\ref{im}) read
\begin{equation}\label{pr}
    \dot x^i=V^i(x)+\lambda^\alpha Z_\alpha^i(x)\,, \qquad T_a(x)=0\,,
\end{equation}
where the distribution $\mathcal{E}=\mathrm{span}\{V,Z_\alpha\}$
is assumed to be tangent to the complete constraint surface
$\bar{\Sigma}=\{x\in U|\; T_a(x)=0\}$.

 The infinitesimal gauge symmetry transformations
(\ref{gt}) are sought for equations (\ref{pr}) in the form:
\begin{equation}\label{gtr}
 \delta_\varepsilon
x^i=\sum_{n=0}^{p}R^i_{p-n}\stackrel{_{(n)}}{\varepsilon}\,,\qquad
\delta_\varepsilon
\lambda^\alpha=\sum_{n=0}^{p+1}U^\alpha_{p+1-n}\stackrel{_{(n)}}{\varepsilon}\,,
\end{equation}
where $R$'s and $U$'s are some functions of $x$, $\dot \lambda$,
$\ddot \lambda$, ... up to some finite order
$\stackrel{_{(k)}}{\lambda}$, and $\varepsilon (t)$ are the
transformation parameters, being arbitrary functions of $t$.
Notice that the integer $p$, called the \textit{order of gauge
transformation}, is not fixed a priori. Moreover, it can always be
risen by the trivial replacement $\varepsilon \mapsto \dot
\varepsilon$ of the gauge transformation parameter. Conversely, if
$R_p=0$ and $U_{p+1}=0$, then the reverse change of variables
$\dot \varepsilon \mapsto \varepsilon$ depresses the highest order
of the derivative of $\varepsilon$ in (\ref{gtr}). When the
functions $R_p$ and $U_{p+1}$ do not vanish
simultaneously\footnote{Below, we will prove the following fact:
For an undepressible gauge transformation all the vector fields
$R_{k}$, $k=0,1,...,p$, are linear independent. This means that
the gauge variation of $x$'s contains actually all the derivatives
$\varepsilon, \dot\varepsilon,
...,\stackrel{_{(p)}}{\varepsilon}$, without gaps.}, the highest
order of the  derivative cannot be depressed and we refer to
corresponding transformation as \textit{undepressible}.

Let us first consider the issue of gauge symmetry for the
dynamical system (\ref{pr}) without constraints $T_a$. The gauge
invariance of the dynamics with respect to the infinitesimal
transformations (\ref{gtr}) means that the gauge variation of
(\ref{pr}) should vanish on their solutions, i.e.,
\begin{equation}\label{dee}
    \delta_\varepsilon (\dot x^i-V^i-\lambda^\alpha
    Z^i_\alpha)|_{\mathrm{on-shell}}=0\,.
\end{equation}
Excluding the velocity  $\dot x^i$ in (\ref{dee}) with the help of
(\ref{pr}), we arrive at the following recurrent relations for the
structure functions $R$ and $U$ of the gauge transformation
(\ref{gtr}):
 \begin{equation}\label{rec1}
R^i_0=W^i_0\,,
\end{equation}
\begin{equation}\label{rec2}
 R^i_{n+1}=DR^i_n+W^i_{n+1}\,,\qquad
n=0,...,p-1\,,
\end{equation}
\begin{equation}\label{rec3}
 R^i_{p+1}\equiv DR_p^i+W^i_{p+1}=0\,.
\end{equation}
Here we have introduced the following abbreviations:
\begin{equation}
W^i_n=U_n^\alpha Z^i_\alpha\,,\qquad D=-\partial-[V+\lambda^\alpha
Z_\alpha,\,\cdot\;] \, ,
\end{equation}
and $\partial$  is understood as the time derivative acting only
on $\lambda$'s, i.e.,
\begin{equation}\label{dpart}
    \partial =\dot \lambda^\alpha\frac{\partial}{\partial
    \lambda^\alpha}+\ddot \lambda^\alpha\frac{\partial}{\partial\dot
    \lambda^\alpha}+\cdots\,.
\end{equation}
Equation (\ref{rec1}) implies that the coefficient at the highest
derivative of $\varepsilon$ in $\delta_{\varepsilon}x^i$ is given
by a linear combination of vector fields  from the characteristic
distribution $\mathcal{Z}$. Relations (\ref{rec2}) enable one to
express  all the structure functions $R$ in terms of the functions
$U$ defining the gauge transformations of the undetermined
multipliers $\lambda$. Namely, by solving relations (\ref{rec2}),
we find
\begin{equation}
R^i_n=\sum_{m=0}^nD^mW^i_{n-m}\,.
\end{equation}
In particular, equation (\ref{rec3}) takes the form
\begin{equation}\label{compcond}
R^i_{p+1}\equiv \sum_{m=0}^{p+1}D^mW^i_{p+1-m}=0\,.
\end{equation}
The last equation on $U$'s is the only nontrivial condition to
satisfy.  Its solutions form a linear space, which dimension is to
be computed, depending on $p$. To this end, certain regularity
conditions must be imposed on the equations of motion and their
domain of definition. Also, we will use some properties of and
notions about the distributions, which are briefly listed below.

So far, the components of the vector fields $V$, $Z_\alpha$  were
supposed to be analytic functions defined in some domain of linear
space. In the following, to avoid specifying each time the
definitional domain for the gauge transformations, we relax the
condition of analiticity and utilize the algebraic setting usually
adopted in control theory \cite{CMP} for the dynamical equations
(\ref{lambda}). Namely, $V$ and $Z_\alpha$ are allowed to be
meromorphic vector fields. As the scar set of a meromorphic
function is of measure zero,  our gauge transformations will be
well defined in an open everywhere dense domain $\widetilde{U}$ of
the original phase space. It is $\widetilde{U}$ that is to be
considered as the definitional domain of the gauge dynamics.

Let $F$ be the field of meromorphic functions and $W$ be the space
of meromorphic vector fields on $U\subset \mathbb{R}^n$. The space
$W$ has both the structure  of a real, infinite-dimensional Lie
algebra with respect to the commutator of vector fields and the
structure of $n$-dimensional vector space over $F$. We will refer
to $F$-linear subspaces of $W$ as \textit{distributions}.
Generally, an arbitrary distribution $P\subset W$ is not closed
with respect to the commutator, i.e., the commutator of two vector
fields from $P$ may not belong to $P$. If $[P,P]\subset P$, then
the distribution $P$ is said to be \textit{involutive}. The
\textit{Lie closure of $P$} is defined as the minimal involutive
distribution $\bar{P}\subset W$ that contains $P$. Uniqueness of
$\bar{P}$ follows immediately from the fact that the set of the
involutive distributions is closed  with respect to intersecting
of linear spaces. Furthermore, one can inductively see that the
distribution
 $\bar{P}$ is generated by the vector fields
\begin{equation}\label{itercom}
[\cdots[u_1,u_2],u_3],\cdots], u_k]\qquad \forall u_k\in P\,,\quad
\forall k\in \mathbb{N}\,.
\end{equation}
For an involutive distribution $P$,  $\bar{P}=P$.

By taking the multiple  commutators (\ref{itercom}), one can
filter the Lie closure $\bar{P}$ by the sequence of
sub-distributions
\begin{equation}\label{Dl}
\bar{{P}}_l=\mathrm{span}_\mathcal{F}\big\{\quad
[\cdots[u_1,u_2],u_3],\cdots], u_k]\quad \big |\quad \forall
u_i\in P\,,\quad k=1,...,l\quad \big\}
\end{equation}
such that
\begin{equation}\label{chainclosure}
\bar{P}= \bar{P}_m\supset \bar{P}_{m-1}\supset \bar{P}_{m-2}
\supset \cdots \supset \bar{P}_1=P \,.
\end{equation}
This filtration  is known as the \textit{Lie flag of the
distribution} $P$. Clearly, the minimal integer $m\leq n-\dim P$
involved in (\ref{chainclosure}) is an invariant of the
distribution $P$ together with the numbers $d_k=\dim \bar{P}_k$.
The integer $m$ is usually referred to as the \textit{depth of the
distribution} $P$ and the sequence of integers $(d_1,d_2,...,d_m)$
is called the \textit{growth vector} of  $P$.

To check the solvability of  equations (\ref{compcond}) we  need
some extension of the field $F$.  Along with the coordinates $x^i$
in $\mathbb{R}^n$ we introduce the infinite set of variables
$\{\lambda^\alpha_k\}$, $\alpha=1,...,m$, $k\in \mathbb{N}$.
Denote by $\mathcal{F}$ the field of meromorphic functions of
$x^i$ and a \textit{finite} number of the variables
$\lambda^\alpha_k$. Replacing in the definition of $W$ the field
$F$ by its extension $\mathcal{F}$, we get the $n$-dimensional
vector space
\begin{equation}\label{}
    \mathcal{W}=\mathrm{span}_{\mathcal{F}}\left(\frac{\partial}{\partial x^1}\,,\cdots,\frac{\partial}{\partial
    x^n}\right)\,.
\end{equation}
The $\mathcal{F}$-linear subspaces of the vector space
$\mathcal{W}$ will be called $\lambda$-distributions.

Let us turn $\mathcal{F}$ into a \textit{differential field} by
setting
$$
\partial x^i=0\,,\qquad \partial
\lambda^\alpha_k=\lambda^\alpha_{k+1}\,.
$$
Clearly, this definition just mimics the definition of the time
derivative (\ref{dpart}) if one set
$\lambda^\alpha_0=\lambda^\alpha$. The action of the operator
$\partial$ can be further extended to the $\lambda$-distributions
by the rule
$$
\partial(aw)=(\partial a)w+a\partial w\qquad \forall a\in
\mathcal{F}\,,\quad \forall w\in \mathcal{W}\,,
$$
$$
\partial \left(\frac{\partial}{\partial x^i}\right)=0\,.
$$

With all the definitions above, we can start studying the gauge
symmetries (\ref{gtr}) of the system (\ref{pr}). Every dynamical
system (\ref{pr}) defines and is defined by the distribution
$\mathcal{E}$  generated by the vector fields $V$ and $Z_\alpha$.
Associated to  $\mathcal{E}$ is a \textit{gauge distribution}
$\mathcal{Z}_V$.   The latter is defined as the limit of a
filtration
\begin{equation}\label{filter}
     \mathcal{Z}^0_V\subset \mathcal{Z}^1_V\subset \cdots
    \subset \mathcal{Z}^\infty_V=\mathcal{Z}_V\subset \mathcal{W}
\end{equation}
of involutive $\lambda$-distributions $\mathcal{Z}^k_V$ given by
\begin{equation}\label{}
    \mathcal{Z}_V^k=\overline{
    \bigcup_{m=0}^k
\underbrace{[V,...,[V}_{m},{\mathcal{Z}}]}\,.
\end{equation}
It is clear that $\mathcal{Z}_V^0=\bar{\mathcal{Z}}$ and
$\bar{\mathcal{E}}=\mathcal{Z}_V\cup V$.

Define an $\mathbb{R}$-linear operator $D:
{\mathcal{W}}\rightarrow {\mathcal{W}}$ by
\begin{equation}\label{D}
 Dw=-\partial w
-[V,w]-\lambda^\alpha[Z_\alpha, w ]\qquad \forall w\in
{\mathcal{W}}\,.
\end{equation}
Although the operator $D$ is not $\mathcal{F}$-linear, it
satisfies the following analog of the Leibnitz rule:
\begin{equation}\label{LR}
    D(aw)=(Da)w+aDw \qquad \forall a\in \mathcal{F}\,,\quad
    \forall w\in \mathcal{W}\,,
\end{equation}
where
\begin{equation}\label{}
    Da=-\partial a-Va-\lambda^\alpha Z_\alpha a\,.
\end{equation}
Since $D$ is a differentiation of the field $\mathcal{F}$, one can
thought of $\mathcal{W}$ as a differential $\mathcal{F}$-module.

Given an element $w\in \mathcal{Z}$, consider the sequence of
elements  $D^kw\in \mathcal{Z}_V$, $k\in \mathbb{N}$. As
${\mathcal{Z}_V}$ is of finite dimension, there is  $p \in
\mathbb{N}$ and a set of functions $a_1,\cdots, a_p\in
\mathcal{F}$ such that
\begin{equation}\label{1}
D^pw=a_1D^{p-1}w+a_2D^{p-2}w+\cdots +a_pw\,.
\end{equation}

\begin{prop}\label{Prop1} Let $w\in \mathcal{Z}$ satisfy equation (\ref{1}),
then there exist a sequence of elements $w_1,...,w_p\in
\mathcal{Z}$ such that
\begin{equation}\label{2}
D^pw+D^{p-1}w_1+D^{p-2}w_2+\cdots + w_p=0
\end{equation}
and $w_k=b_k w$ for some $b_k\in \mathcal{F}$.
\end{prop}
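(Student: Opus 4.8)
The plan is to recognize equation (\ref{2}) as nothing but the \emph{right-coefficient} normal form (coefficients sitting \emph{inside} the powers of $D$) of the same object whose \emph{left-coefficient} form is displayed in (\ref{1}), and to pass from one to the other using the Leibnitz rule (\ref{LR}). Since $\mathcal{Z}$ is by definition an $\mathcal{F}$-linear subspace of $\mathcal{W}$ and $w\in\mathcal{Z}$, any element of the form $b_kw$ with $b_k\in\mathcal{F}$ automatically lies in $\mathcal{Z}$; hence the only real task is to produce the scalars $b_k$, the membership $w_k\in\mathcal{Z}$ coming for free.

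First I would record the generalized Leibnitz rule, which follows by a straightforward induction from (\ref{LR}): for any $b\in\mathcal{F}$ and any $m\geq 0$,
$$
D^m(bw)=\sum_{j=0}^m\binom{m}{j}(D^jb)\,D^{m-j}w\,.
$$
Applying this with $m=p-k$ to each summand of $\sum_{k=1}^pD^{p-k}(b_kw)$ and gathering the terms proportional to $D^{p-i}w$, the left-hand side of (\ref{2}) becomes an explicit $\mathcal{F}$-linear combination of $w,Dw,\dots,D^pw$ in which $D^pw$ occurs with coefficient $1$ (no lower summand can reach order $p$), while for $i=1,\dots,p$ the coefficient of $D^{p-i}w$ equals $b_i+\sum_{k=1}^{i-1}\binom{p-k}{i-k}(D^{i-k}b_k)$.

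The key step is then to invoke (\ref{1}). Substituting $D^pw=\sum_{i=1}^pa_iD^{p-i}w$ eliminates the order-$p$ term and turns the left-hand side of (\ref{2}) into a combination of $w,Dw,\dots,D^{p-1}w$ alone, the coefficient of $D^{p-i}w$ being $a_i+b_i+\sum_{k=1}^{i-1}\binom{p-k}{i-k}(D^{i-k}b_k)$. Requiring each of these $p$ coefficients to vanish gives a system for $b_1,\dots,b_p$ that is \emph{unitriangular}: the first equation reads $b_1+a_1=0$, and the $i$-th one determines $b_i$ in terms of $a_i$, of the already-found $b_1,\dots,b_{i-1}$, and of their iterated $D$-derivatives. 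The recursion therefore closes and fixes each $b_i\in\mathcal{F}$ uniquely; setting $w_k=b_kw$ yields (\ref{2}).

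I do not expect a genuine obstacle, the argument being a formal manipulation in the differential $\mathcal{F}$-module $\mathcal{W}$; the only point demanding care is the bookkeeping in the Leibnitz expansion, namely checking that $b_i$ enters the coefficient of $D^{p-i}w$ with the unit factor $\binom{p-i}{0}=1$, which is precisely what makes the system unitriangular and hence solvable. Note finally that the elements $w,Dw,\dots,D^{p-1}w$ need \emph{not} be $\mathcal{F}$-linearly independent: after (\ref{1}) is used, the left-hand side of (\ref{2}) is a combination of these elements all of whose scalar coefficients have been arranged to vanish, so it is the zero element of $\mathcal{W}$ outright, independently of any relations among the $D^jw$.
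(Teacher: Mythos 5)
Your proof is correct and takes essentially the same route as the paper, which dispatches this proposition in one line (``proved by induction, making use of the Leibnitz rule (\ref{LR})''): your unitriangular recursion $b_i=-a_i-\sum_{k=1}^{i-1}\binom{p-k}{i-k}(D^{i-k}b_k)$, obtained from the binomial Leibnitz expansion of $D^{p-k}(b_k w)$ after substituting (\ref{1}), is precisely that induction made explicit, with closure under $D$ of the differential field $\mathcal{F}$ guaranteeing $b_i\in\mathcal{F}$ and $\mathcal{F}$-linearity of $\mathcal{Z}$ giving $w_k=b_kw\in\mathcal{Z}$. Your closing observation that no $\mathcal{F}$-linear independence of $w,Dw,\dots,D^{p-1}w$ is needed---because all coefficients are arranged to vanish identically, making (\ref{2}) hold outright---is a correct and useful point of care.
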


The proposition is proved by induction, making use of the Leibnitz
rule (\ref{LR}).

\begin{cor}  For any vector field $W_0$ from the characteristic
distribution $\mathcal{Z}=\mathrm{span}\{Z_\alpha\}$,  equation
(\ref{compcond}) has a solution for some $p$. In other words,
every basis vector field $Z_\alpha$ generates a gauge
transformation, so that the total number of independent gauge
parameters $\varepsilon$ coincides with $\dim \mathcal{Z}$.
\end{cor}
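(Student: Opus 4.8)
The plan is to recognize that, by the recurrence (\ref{rec1})--(\ref{rec3}), constructing a gauge transformation of order $p$ reduces to the single nontrivial equation (\ref{compcond}): one must find $W_1,\dots,W_{p+1}\in\mathcal{Z}$ with $\sum_{m=0}^{p+1}D^mW_{p+1-m}=0$ for a prescribed leading field $W_0\in\mathcal{Z}$, after which all the $R$'s are fixed by the recursion. I would therefore take $W_0$ to be an arbitrary element of $\mathcal{Z}$ (in particular a basis field $Z_\alpha$) and show such a solution always exists, by first verifying the hypothesis (\ref{1}) of Proposition \ref{Prop1} for $w=W_0$ and then invoking the Proposition.

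First I would check that the operator $D$ of (\ref{D}) maps the gauge distribution $\mathcal{Z}_V$ into itself. The commutator pieces $-[V,\cdot]$ and $-\lambda^\alpha[Z_\alpha,\cdot]$ preserve $\mathcal{Z}_V$: by construction of the filtration (\ref{filter}) the distribution $\mathcal{Z}_V$ is involutive, closed under bracketing with $V$, and contains $\mathcal{Z}$, so both $[V,\mathcal{Z}_V]$ and $[\mathcal{Z},\mathcal{Z}_V]$ lie in $\mathcal{Z}_V$. The derivative piece $-\partial$ also preserves $\mathcal{Z}_V$, because $\mathcal{Z}_V$ is generated over $\mathcal{F}$ by iterated commutators of the purely $x$-dependent fields $V,Z_\alpha$; on such a basis $\partial$ acts only on the $\mathcal{F}$-valued coefficients. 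Hence $D$ restricts to an $\mathbb{R}$-linear endomorphism of $\mathcal{Z}_V$.

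Next, since $\mathcal{Z}_V$ is an $\mathcal{F}$-subspace of the $n$-dimensional module $\mathcal{W}$, it is finite-dimensional over $\mathcal{F}$. Therefore the sequence $w,Dw,D^2w,\dots$ cannot remain $\mathcal{F}$-linearly independent, and there is a smallest $p$ for which $D^pw$ lies in the $\mathcal{F}$-span of the lower iterates; this is a monic relation of exactly the form (\ref{1}). With (\ref{1}) established, Proposition \ref{Prop1} supplies fields $w_1,\dots,w_p\in\mathcal{Z}$ (each a function multiple of $w$) satisfying (\ref{2}). Relabelling with gauge order $p-1$ and setting $W_0=w$, $W_k=w_k$, equation (\ref{2}) becomes precisely (\ref{compcond}); reading off $W_k=U^\alpha_kZ_\alpha$ recovers the structure functions $U^\alpha_k$, so a genuine gauge transformation is produced for each chosen $W_0$.

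Finally I would settle the counting. By (\ref{rec1}) the coefficient of the highest derivative of $\varepsilon$ in $\delta_\varepsilon x^i$ is $R_0=W_0\in\mathcal{Z}$, so every gauge transformation has its leading field inside $\mathcal{Z}$; conversely the construction above yields one transformation for each leading choice $W_0=Z_\alpha$. Independent basis fields thus give independent gauge parameters, and no further independent parameters can arise since $R_0$ is forced into $\mathcal{Z}$; hence the number of independent $\varepsilon$ equals $\dim\mathcal{Z}$. The delicate point throughout is the stability of $\mathcal{Z}_V$ under $D$, and specifically that the $\partial$-term does not leak out of $\mathcal{Z}_V$, which rests on $\mathcal{Z}_V$ admitting a basis of $x$-only vector fields; the remainder is finite-dimensional linear algebra feeding Proposition \ref{Prop1}.
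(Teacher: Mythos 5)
Your proof is correct and follows essentially the same route as the paper: the finite $\mathcal{F}$-dimension of $\mathcal{Z}_V$ forces the monic relation (\ref{1}) for the iterates $D^k w$, Proposition \ref{Prop1} converts it into (\ref{2}), which is exactly (\ref{compcond}) after relabelling the order, and the counting follows because (\ref{rec1}) confines the leading coefficient $R_0=W_0$ to $\mathcal{Z}$. The only addition is your explicit verification that $D$ preserves $\mathcal{Z}_V$ (including the $\partial$-term, via the $x$-only generating set), a point the paper leaves implicit in asserting $D^k w\in\mathcal{Z}_V$, and it is a worthwhile clarification.
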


An element  $w\in \mathcal{Z}$ is said to have degree not higher
than $p$, if it satisfies equation (\ref{2}) with some (not
necessarily linear independent) $w_k\in \mathcal{Z}$. It is clear
that the elements of degree not higher than $p$ form a
$\lambda$-distribution $\mathcal{Z}_p\subset\mathcal{W}$. We have
the finite filtration
$$
\mathcal{Z}=\mathcal{Z}_{N}\supset
\mathcal{Z}_{N-1}\supset\cdots\supset \mathcal{Z}_1\supset 0\,.
$$
The numbers
$$
\delta_p=\dim \mathcal{Z}_p-\dim \mathcal{Z}_{p-1}
$$
are called the \textit{indices} of the characteristic distribution
$\mathcal{Z}$.

By making use of the Euclidean metric in $\mathbb{R}^n$, we can
split the imbedding $\mathcal{Z}_{p-1}\subset \mathcal{Z}_p$ as
$\mathcal{Z}_p=\mathcal{Z}_{p-1}\oplus \mathcal{Z}^\perp_{p-1}$.
As a result the characteristic distribution is decomposed into the
direct sum
\begin{equation}\label{decomp}
\mathcal{Z}=\mathcal{Z}^{1}\oplus
\mathcal{Z}^{2}\oplus\cdots\oplus \mathcal{Z}^{N}\,,\qquad
Z^p\simeq \mathcal{Z}_p/\mathcal{Z}_{p-1}\,,\qquad
\dim\mathcal{Z}^p=\delta_p\,.
\end{equation}
In a driftless theory $V=0$ and $\delta_1\geq 1$ as
$$
D(\lambda^\alpha_0Z_\alpha)=\lambda_1^\alpha Z_\alpha
$$
and equation (\ref{2}) is satisfied with $p=1$. The corresponding
gauge transformation (\ref{gtr}) is the time reparametrization.

We use the decomposition (\ref{decomp}) to construct a basis of
undepressible gauge transformations for the equations of motion
(\ref{pr}). If $\{Z_{\alpha_p}\}_{\alpha_p=1}^{\delta_p}$ is a
basis in the $\lambda$-distribution $\mathcal{Z}^p$,  then the
undepressible gauge transformations read
\begin{equation}\label{gtrz}
    \delta_{\varepsilon}x^i=\sum_{n=0}^{p-1}
    R^i_{_{(p-n-1)}\alpha_p}\stackrel{_{(n)}}{\varepsilon}{}^{\alpha_p}\,,\qquad
    \delta_{\varepsilon}\lambda^\alpha=\sum_{n=0}^{p}U_{_{(p-n)}\alpha_p}^\alpha\stackrel{_{(n)}}{\varepsilon}{}^{\alpha_p}\,,
\end{equation}
where
\begin{equation}\label{R's}
    R_{_{(n)}\alpha_p}^i=\sum_{m=0}^n D^mW^i_{_{(n-m)}\alpha_p}\,,\qquad
    W_{_{(n)}\alpha_p}^i=U_{_{(n)}\alpha_p}^\alpha Z_\alpha^i\,,\qquad
    U_{_{(0)}\alpha_p}^\alpha=\delta_{\alpha_p}^\alpha\,.
\end{equation}

\begin{prop}\label{Prop2}
If $w\in \mathcal{Z}^p$, then the vector fields $w$, $Dw$, $D^2w$,
..., $D^{p-1}w$ are linearly  independent.
\end{prop}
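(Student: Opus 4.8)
The plan is to argue by contradiction, converting any linear dependence among $w, Dw, \ldots, D^{p-1}w$ into a lowering of the degree of $w$, which is forbidden by the definition of $\mathcal{Z}^p$. First I would record the only structural fact I need about the decomposition (\ref{decomp}): a nonzero $w\in\mathcal{Z}^p$ has degree exactly $p$. Indeed, $\mathcal{Z}^p$ is a complement of $\mathcal{Z}_{p-1}$ inside $\mathcal{Z}_p$, so $w\in\mathcal{Z}_p\setminus\mathcal{Z}_{p-1}$, and by the definition of the filtration this says precisely that $w$ satisfies (\ref{2}) for the index $p$ but for no smaller index.

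Next, suppose toward a contradiction that $w, Dw, \ldots, D^{p-1}w$ are linearly dependent over $\mathcal{F}$, i.e. $\sum_{k=0}^{p-1}c_k D^k w=0$ with $c_k\in\mathcal{F}$ not all zero. Letting $r$ be the largest index with $c_r\neq 0$ and solving for the top term (using that $\mathcal{F}$ is a field) gives $D^r w = a_1 D^{r-1}w + a_2 D^{r-2}w + \cdots + a_r w$ with $a_j\in\mathcal{F}$ and $1\le r\le p-1$ (the degenerate case $r=0$ would force $w=0$). This is exactly relation (\ref{1}) with the exponent $r$ in place of $p$. Then Proposition~\ref{Prop1} applies to it and produces $w_1,\ldots,w_r\in\mathcal{Z}$ of the form $w_k=b_k w$ such that $D^r w + D^{r-1}w_1 + \cdots + w_r = 0$, which is (\ref{2}) at level $r$. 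Hence $w$ has degree not higher than $r\le p-1$, so $w\in\mathcal{Z}_{p-1}$, contradicting $w\in\mathcal{Z}^p\setminus\{0\}$. Therefore no such dependence exists and $w, Dw, \ldots, D^{p-1}w$ are linearly independent.

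The step I expect to require the most care is the passage from a bare dependence relation to the degree-lowering form (\ref{2}). Because $D$ is a derivation of $\mathcal{F}$ rather than an $\mathcal{F}$-linear operator, the coefficients $c_k$ cannot simply be pulled through the powers of $D$, so the naive ``minimal polynomial'' reasoning does not close on its own. This is exactly the gap bridged by Proposition~\ref{Prop1}: its conclusion $w_k=b_k w$ guarantees that the intermediate fields stay inside the characteristic distribution $\mathcal{Z}$ (which is an $\mathcal{F}$-module), and membership in $\mathcal{Z}$ is precisely what the definition of degree demands. Once that conversion is available the remainder of the argument is purely formal, so the whole difficulty is concentrated in invoking Proposition~\ref{Prop1} correctly with the shifted exponent $r$.
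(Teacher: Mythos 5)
Your proof is correct and takes essentially the same route as the paper's, which disposes of the statement in one line: a linear dependence among $w, Dw,\ldots, D^{p-1}w$ would, via Proposition \ref{Prop1}, yield a relation of type (\ref{2}) at some order $r<p$ and hence place $w$ in a lower stratum of the filtration, contradicting $w\in\mathcal{Z}^p$. Your additional care — isolating the top nonzero coefficient $r$, handling the degenerate case $r=0$, and noting that the non-$\mathcal{F}$-linearity of $D$ is exactly what makes the appeal to Proposition \ref{Prop1} necessary — merely fills in details the paper leaves implicit.
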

This is true, otherwise it would be $w\in \mathcal{Z}^q$ with
$q\leq p$, because of Proposition \ref{Prop1}.

\begin{cor} The vector fields $R_{^{(n)\alpha_p}}^i$ entering (\ref{gtrz})  are linear
independent, so that $\delta_\varepsilon x^i$ involves all the
successive  derivatives of the gauge parameter up to the order
$k$.
\end{cor}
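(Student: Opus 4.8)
The plan is to prove the independence by a leading-term argument with respect to the order in $D$. I fix the stage $p$ and an $\mathcal{F}$-basis $\{Z_{\alpha_p}\}_{\alpha_p=1}^{\delta_p}$ of $\mathcal{Z}^p$, so that the task becomes showing that the $p\delta_p$ vector fields $R_{(n)\alpha_p}$, $n=0,\dots,p-1$, are $\mathcal{F}$-linearly independent. I would introduce the $D$-degree filtration
\[
\mathcal{G}_n=\mathrm{span}_{\mathcal{F}}\{D^m w\mid w\in\mathcal{Z},\ 0\le m\le n\},\qquad \mathcal{G}_{-1}=0.
\]
Using the Leibnitz rule (\ref{LR}) one gets the normal-form identity $\mathcal{G}_n=\{\sum_{m=0}^{n}D^m u_m\mid u_m\in\mathcal{Z}\}$, which shows that $D\mathcal{G}_n\subset\mathcal{G}_{n+1}$ and that each $R_{(n)\alpha_p}=\sum_{m=0}^n D^m W_{(n-m)\alpha_p}$ lies in $\mathcal{G}_n$. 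Since $W_{(0)\alpha_p}=Z_{\alpha_p}$ while the remaining $W_{(j)\alpha_p}\in\mathcal{Z}$, only the top summand $m=n$ escapes $\mathcal{G}_{n-1}$, giving the leading-term formula
\[
R_{(n)\alpha_p}\equiv D^n Z_{\alpha_p}\pmod{\mathcal{G}_{n-1}}.
\]

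The key step is to check that these leading terms survive in the associated graded. First I would record the degree characterization $w\in\mathcal{Z}_n\Leftrightarrow D^n w\in\mathcal{G}_{n-1}$: the forward direction is exactly equation (\ref{2}) read as $D^n w=-\sum_k D^{n-k}w_k$, and the converse uses the normal-form identity to recast any representation of $D^n w$ inside $\mathcal{G}_{n-1}$ back into the shape (\ref{2}). Granting this, I take $0\neq w\in\mathcal{Z}^p$ and $n<p$: were $D^n w\in\mathcal{G}_{n-1}$, the characterization would force $w\in\mathcal{Z}_n\subset\mathcal{Z}_{p-1}$, contradicting that $w\in\mathcal{Z}^p$ has nonzero image in $\mathcal{Z}_p/\mathcal{Z}_{p-1}$, i.e. $w\notin\mathcal{Z}_{p-1}$. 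Applying this to $w=\sum_{\alpha_p}c_{\alpha_p}Z_{\alpha_p}$ and noting that $D^n w\equiv\sum_{\alpha_p}c_{\alpha_p}D^n Z_{\alpha_p}\pmod{\mathcal{G}_{n-1}}$ (the $\mathcal{F}$-corrections produced by Leibnitz fall into $\mathcal{G}_{n-1}$), I conclude that the classes $[D^n Z_{\alpha_p}]$, $\alpha_p=1,\dots,\delta_p$, are $\mathcal{F}$-independent in $\mathcal{G}_n/\mathcal{G}_{n-1}$ for every $n<p$. This is the quantitative refinement of Proposition \ref{Prop2}, now uniform over the entire basis of $\mathcal{Z}^p$.

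The independence of the $R_{(n)\alpha_p}$ then follows by descending induction on the $D$-order. Assuming $\sum_{n=0}^{p-1}\sum_{\alpha_p}c^n_{\alpha_p}R_{(n)\alpha_p}=0$, I let $n_0$ be the largest $n$ carrying a nonzero coefficient; all terms with $n<n_0$ lie in $\mathcal{G}_{n_0-1}$, so projecting to $\mathcal{G}_{n_0}/\mathcal{G}_{n_0-1}$ and using the leading-term formula leaves $\sum_{\alpha_p}c^{n_0}_{\alpha_p}[D^{n_0}Z_{\alpha_p}]=0$, whence $c^{n_0}_{\alpha_p}=0$ for all $\alpha_p$ by the previous step, a contradiction. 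Since the coefficient of $\stackrel{_{(n)}}{\varepsilon}{}^{\alpha_p}$ in $\delta_\varepsilon x^i$ is $R_{(p-1-n)\alpha_p}$, the independence shows that no derivative order $0,\dots,p-1$ drops out, so the transformation is genuinely undepressible. I expect the main obstacle to be this middle step: guaranteeing that the lower-order corrections buried in $R_{(n)\alpha_p}$—which involve arbitrary elements of $\mathcal{Z}$, possibly taken from graded components $\mathcal{Z}^q$ other than $\mathcal{Z}^p$—really do land in $\mathcal{G}_{n-1}$ and cannot conspire to cancel the leading term $D^n Z_{\alpha_p}$; it is precisely the degree characterization, resting on Propositions \ref{Prop1} and \ref{Prop2}, that rules this out.
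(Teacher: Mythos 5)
Your proof is correct, and it does more than the paper does: the paper states this corollary with no proof at all, presenting it as an immediate consequence of Proposition \ref{Prop2} (which is itself proved in one line from Proposition \ref{Prop1}). Your instinct about where the real work lies is accurate. Proposition \ref{Prop2} as literally stated only gives independence of the string $w, Dw,\dots,D^{p-1}w$ for a \emph{single} $w\in\mathcal{Z}^p$, whereas the subleading terms of $R_{_{(n)}\alpha_p}=\sum_{m=0}^{n}D^mW_{_{(n-m)}\alpha_p}$ involve $D^m$ applied to \emph{other} elements $W_{_{(j)}\alpha_p}\in\mathcal{Z}$, so the corollary does not follow from Proposition \ref{Prop2} by mere substitution; one needs precisely the graded refinement you supply. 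Your filtration $\mathcal{G}_n$, the normal-form identity obtained from the Leibnitz rule (\ref{LR}) (which works because $\mathcal{Z}$ is an $\mathcal{F}$-subspace, so $au\in\mathcal{Z}$), and the characterization $w\in\mathcal{Z}_n\Leftrightarrow D^nw\in\mathcal{G}_{n-1}$ — the forward direction being equation (\ref{2}) and the converse the normal-form rewriting — together show $[D^nZ_{\alpha_p}]\neq 0$ and in fact $\mathcal{F}$-independent in $\mathcal{G}_n/\mathcal{G}_{n-1}$ for $n<p$, which kills any conspiracy among lower-order corrections; the descending induction on the top $D$-order then finishes the argument. This is the same circle of ideas the authors implicitly rely on (Propositions \ref{Prop1} and \ref{Prop2} are exactly your degree characterization in embryonic form), but your write-up makes the gap explicit and closes it, and as a bonus proves the stronger statement that the $R_{_{(n)}\alpha_p}$ are jointly independent across the whole basis $\{Z_{\alpha_p}\}$ of $\mathcal{Z}^p$, not merely along each fixed $\alpha_p$. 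One cosmetic remark: the ``order $k$'' in the corollary's statement should be read as $p-1$ for the variation of $x$'s, consistent with (\ref{gtrz}) and the footnote in Section 4, and that is indeed what your argument establishes.
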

Let us denote
$$
{\mathcal{Z}}_D=\mathrm{span}_{\mathcal{F}}\big\{D^mu\,|\,\forall
u\in \mathcal{Z}, m\in \mathbb{N}\big\}\,.
$$
By construction,  the $\lambda$-distribution  ${\mathcal{Z}}_D$ is
invariant under the action of $D$, i.e., $D{\mathcal{Z}}_D\subset
{\mathcal{Z}}_D$. It turns out  that the distribution
${\mathcal{Z}}_D$ is actually involutive and coincides with
${\mathcal{Z}}_V$.

 \begin{prop}\label{Prop3} ${\mathcal{Z}}_D={\mathcal{Z}}_V$.
\end{prop}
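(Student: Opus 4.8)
The plan is to prove the two inclusions $\mathcal{Z}_D\subseteq \mathcal{Z}_V$ and $\mathcal{Z}_V\subseteq \mathcal{Z}_D$ separately, after recasting both objects through their defining invariance properties. First I would record that $\mathcal{Z}_V$ is nothing but the smallest $\mathcal{F}$-distribution that contains $\mathcal{Z}$, is involutive, and is stable under $\mathrm{ad}_V=[V,\,\cdot\,]$; equivalently, by (\ref{filter}), it is the $\mathcal{F}$-span of all iterated commutators of $V$ and the $Z_\alpha$ that contain at least one $Z$. Writing $\mathcal{B}$ for this set of iterated commutators, we have $\mathcal{Z}_V=\mathrm{span}_{\mathcal{F}}\mathcal{B}$. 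Dually, $\mathcal{Z}_D$ is by construction the smallest $D$-invariant $\mathcal{F}$-submodule of $\mathcal{W}$ containing $\mathcal{Z}$. Throughout I would use the decomposition $D=-\partial-\mathrm{ad}_V-\lambda^\alpha\mathrm{ad}_{Z_\alpha}$ read off from (\ref{D}), together with the Leibnitz rule (\ref{LR}), and the fact that $V$ and the $Z_\alpha$ depend only on $x$, so that $\partial$ annihilates every pure (bracket) vector field.

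For the inclusion $\mathcal{Z}_D\subseteq \mathcal{Z}_V$ I would simply verify that $\mathcal{Z}_V$ is $D$-invariant; minimality of $\mathcal{Z}_D$ then forces the inclusion. Invariance is immediate from the three pieces of $D$: $\mathrm{ad}_V\mathcal{Z}_V\subseteq\mathcal{Z}_V$ is the defining stability under the drift; $\lambda^\alpha\mathrm{ad}_{Z_\alpha}\mathcal{Z}_V\subseteq\mathcal{Z}_V$ holds because $\mathcal{Z}_V$ is involutive and an $\mathcal{F}$-module with $Z_\alpha\in\mathcal{Z}\subseteq\mathcal{Z}_V$; and $\partial\mathcal{Z}_V\subseteq\mathcal{Z}_V$ holds because $\mathcal{Z}_V$ is $\mathcal{F}$-spanned by $x$-dependent brackets $g$ (with $\partial g=0$), so that $\partial(\sum f_jg_j)=\sum(\partial f_j)g_j$ stays inside $\mathcal{Z}_V$. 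Hence $Dw\in\mathcal{Z}_V$ for every $w\in\mathcal{Z}_V$.

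The reverse inclusion is the substantive one. I would reduce it to showing that $\mathcal{Z}_D$ is stable under the individual operators $\mathrm{ad}_V$ and $\mathrm{ad}_{Z_\gamma}$: granting this, every generator of $\mathcal{Z}_V$ can be reached inside $\mathcal{Z}_D$. Indeed, by antisymmetry and the Jacobi identity every element of $\mathcal{B}$ is an $\mathbb{R}$-combination of left-normed brackets whose innermost entry is some $Z_\gamma$, and such a bracket is obtained from $Z_\gamma\in\mathcal{Z}\subseteq\mathcal{Z}_D$ by successively applying $\mathrm{ad}_V$ and $\mathrm{ad}_{Z_\delta}$; so $\mathrm{ad}$-stability propagates all of $\mathcal{B}$ into $\mathcal{Z}_D$, whence $\mathcal{Z}_V=\mathrm{span}_{\mathcal{F}}\mathcal{B}\subseteq\mathcal{Z}_D$. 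To obtain the two $\mathrm{ad}$-stabilities I would start from the single relation $-Dw=\partial w+[V,w]+\lambda^\gamma[Z_\gamma,w]\in\mathcal{Z}_D$, valid for every $w\in\mathcal{Z}_D$, and then iterate $D$. The key is that $D$ acts on the jet coordinates by $D\lambda^\gamma_k=-\lambda^\gamma_{k+1}$, so repeated application of $D$ to a pure bracket automatically introduces the successive, algebraically independent variables $\lambda^\gamma_1,\lambda^\gamma_2,\dots$ as coefficients; concretely $D^m Z_\alpha=(-1)^m\lambda^\gamma_{m-1}[Z_\gamma,Z_\alpha]+(\text{terms of lower jet order})$. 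Because the $\lambda^\gamma_k$ are independent over the field of $x$-functions, the resulting finite system of elements $D^mZ_\alpha\in\mathcal{Z}_D$ has a Wronskian-type coefficient matrix of nonzero determinant, which I would invert over $\mathcal{F}$ to solve for each commutator $[Z_\gamma,w]$ and then recover $[V,w]$ from the defining relation; equivalently, this separation establishes $\partial\mathcal{Z}_D\subseteq\mathcal{Z}_D$ together with $\mathrm{ad}_{Z_\gamma}\mathcal{Z}_D\subseteq\mathcal{Z}_D$.

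The main obstacle is exactly this separation. A priori the stability of $\mathcal{Z}_D$ under $\partial$ and its stability under the individual $\mathrm{ad}_{Z_\gamma}$ are interdependent — extracting $\partial w$ from $Dw$ needs the $\mathrm{ad}_{Z_\gamma}$-pieces and vice versa — so neither can be proved first in isolation; the relation $[\partial,D]=-\lambda^\gamma_1\mathrm{ad}_{Z_\gamma}$ shows precisely how the two are coupled. What breaks the circularity is the algebraic independence of the jet variables $\{\lambda^\gamma_k\}$: iterating $D$ manufactures them as coefficients, and the only genuinely nontrivial step is to check that the corresponding Wronskian determinant does not vanish — equivalently, that repeated $D$-action does not collapse onto a proper subspace of $\mathcal{Z}_V$ — which I would verify from the distinct top-jet terms exhibited above together with the finiteness $\dim_{\mathcal{F}}\mathcal{Z}_V<\infty$ noted before (\ref{1}). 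Everything else is Jacobi-identity bookkeeping and the Leibnitz rule (\ref{LR}).
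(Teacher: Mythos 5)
Your two reductions are sound, and your route is genuinely different from the paper's. The inclusion $\mathcal{Z}_D\subseteq\mathcal{Z}_V$ via $D$-invariance of $\mathcal{Z}_V$ plus minimality of $\mathcal{Z}_D$ is correct (the paper merely asserts this direction), and your reformulation of $\mathcal{Z}_V$ as the $\mathcal{F}$-span of iterated brackets containing at least one $Z$, together with the reduction of the hard inclusion to $\mathrm{ad}_V$- and $\mathrm{ad}_{Z_\gamma}$-stability of $\mathcal{Z}_D$, is fine. The paper proves the hard inclusion quite differently: it uses that the transformations (\ref{gtrz}) built from $\mathcal{Z}_D$ exhaust \emph{all} gauge symmetries, so the gauge algebra closes on shell (\ref{Com}), and extracts $[Z_\alpha,Z_\beta]\in\mathcal{Z}_D$ by comparing coefficients of independent products of derivatives of the gauge parameters; the drift is then handled by embedding into the driftless system (\ref{gauging}) with the extra multiplier $e$ and the ideal property (\ref{ideal}) of the $\varepsilon$-transformations. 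Your direct differential-algebraic argument would bypass all of that machinery (including the completeness claim for the gauge symmetries, which the paper itself does not fully justify), and your leading-term formula $D^m Z_\alpha=(-1)^m\lambda^\gamma_{m-1}[Z_\gamma,Z_\alpha]+v_m$, with $v_m$ involving jets of order at most $m-2$, is correct. Note that the germ of your idea is also present in the paper, but there it appears as the statement that the derivatives of $\varepsilon$ can take arbitrary values at a fixed instant; you internalize it into the jet algebra directly.

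However, the pivotal separation step does not work as you describe it. There is no finite ``Wronskian-type'' square system to invert: each application of $D$ that manufactures the new top-jet term $(-1)^m\lambda^\gamma_{m-1}[Z_\gamma,w]$ simultaneously manufactures new unknown brackets of greater depth (already $D^2w$ contains $\lambda^\gamma_0\lambda^\delta_0[Z_\delta,[Z_\gamma,w]]$), so at every finite stage the system of elements $D^mw\in\mathcal{Z}_D$ is underdetermined relative to the set of unknown brackets, and the finiteness of $\dim\mathcal{Z}_V$ does not square it up, because the would-be matrix entries are not under control. The correct implementation of your jet-independence idea is coefficient comparison rather than matrix inversion. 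Fix a pure (purely $x$-dependent) $w\in\mathcal{Z}_D$. Since $\mathcal{Z}_D$ is a subspace of the $n$-dimensional $\mathcal{F}$-space $\mathcal{W}$, it is the common kernel of finitely many $\mathcal{F}$-covectors $\theta$, whose components depend on jets of order at most some finite $K$. For $m\geq K+2$ the identity $\theta(D^mw)=0$ contains the variable $\lambda^\gamma_{m-1}$ only linearly, in the leading term; applying $\partial/\partial\lambda^\gamma_{m-1}$ gives $\theta([Z_\gamma,w])=0$ for every annihilator $\theta$, hence $[Z_\gamma,w]\in\mathcal{Z}_D$, and then $[V,w]=-Dw-\lambda^\gamma_0[Z_\gamma,w]\in\mathcal{Z}_D$ as well, using $\partial w=0$. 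Induction over pure iterated brackets, combined with your left-normed-bracket reduction, then yields $\mathcal{Z}_V\subseteq\mathcal{Z}_D$. With this repair your proof closes, and it is arguably more self-contained than the paper's; as written, though, the decisive step is asserted in a form that would fail if executed literally.
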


\begin{cor}
The gauge variations (\ref{gtrz}) of $x$'s are spanned by
$\mathcal{Z}_V$.
\end{cor}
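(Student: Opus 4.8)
The assertion is that the $\mathcal{F}$-linear span of the vector fields $R_{_{(n)}\alpha_p}$ that appear as coefficients of the gauge parameter and its derivatives in (\ref{gtrz}) equals $\mathcal{Z}_V$. Since Proposition \ref{Prop3} already identifies $\mathcal{Z}_V$ with $\mathcal{Z}_D=\mathrm{span}_{\mathcal{F}}\{D^mu\,|\,u\in\mathcal{Z},\,m\in\mathbb{N}\}$, the plan is to prove instead that the $R_{_{(n)}\alpha_p}$, ranged over all stages $p$, all basis indices $\alpha_p$ of $\mathcal{Z}^p$, and all $0\le n\le p-1$, span $\mathcal{Z}_D$. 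One inclusion is immediate: each $W_{_{(n)}\alpha_p}=U^\alpha_{_{(n)}\alpha_p}Z_\alpha$ lies in $\mathcal{Z}$, so by (\ref{R's}) every $R_{_{(n)}\alpha_p}=\sum_{m=0}^nD^mW_{_{(n-m)}\alpha_p}$ is a sum of $D$-images of elements of $\mathcal{Z}$ and hence belongs to $\mathcal{Z}_D$.

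For the converse I would first pin down a convenient finite generating set for $\mathcal{Z}_D$. Fixing the basis $\{Z_{\alpha_p}\}$ of each summand in the decomposition $\mathcal{Z}=\mathcal{Z}^1\oplus\cdots\oplus\mathcal{Z}^N$, their union is a basis of $\mathcal{Z}$, so $\mathcal{Z}_D=\mathrm{span}_{\mathcal{F}}\{D^mZ_{\alpha_p}\}$. For $Z_{\alpha_p}\in\mathcal{Z}^p$, equation (\ref{2}) together with the Leibnitz rule (\ref{LR}) expresses $D^pZ_{\alpha_p}$ as an $\mathcal{F}$-combination of $Z_{\alpha_p},DZ_{\alpha_p},\ldots,D^{p-1}Z_{\alpha_p}$; iterating shows the same for every $D^mZ_{\alpha_p}$ with $m\ge p$. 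Hence $\mathcal{Z}_D$ is generated by the finite family $\{D^nZ_{\alpha_p}\,|\,0\le n\le p-1\}$, which by Proposition \ref{Prop2} is moreover linearly independent.

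It then remains to show that each generator $D^nZ_{\alpha_p}$ ($0\le n\le p-1$) lies in the span of the $R$'s, which I would do by induction on $n$ exploiting the triangular structure
$$
R_{_{(n)}\alpha_p}=D^nZ_{\alpha_p}+\sum_{m=0}^{n-1}D^mW_{_{(n-m)}\alpha_p}
$$
read off from (\ref{R's}), whose leading term uses $W_{_{(0)}\alpha_p}=Z_{\alpha_p}$. The base case is $R_{_{(0)}\alpha_p}=Z_{\alpha_p}$, so all of $\mathcal{Z}$ lies in the span. Assuming $D^jZ_{\beta_q}$ is in the span for all $j<n$, I expand each lower term $D^mW_{_{(n-m)}\alpha_p}$ with $m<n$ by writing $W_{_{(n-m)}\alpha_p}=\sum c_{\beta_q}Z_{\beta_q}$ with $c_{\beta_q}\in\mathcal{F}$ and applying (\ref{LR}): this produces an $\mathcal{F}$-combination of $D^jZ_{\beta_q}$ with $j\le m<n$, which is in the span by the inductive hypothesis. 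Subtracting these terms from $R_{_{(n)}\alpha_p}$ isolates $D^nZ_{\alpha_p}$, completing the induction and hence the reverse inclusion. Together with the first paragraph this gives $\mathrm{span}_{\mathcal{F}}\{R_{_{(n)}\alpha_p}\}=\mathcal{Z}_D=\mathcal{Z}_V$.

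The one place demanding care is the bookkeeping in the inductive step: one must verify that applying $D^m$ to a scalar multiple of $Z_{\beta_q}$ never raises the $D$-order above $m<n$, so that the inductive hypothesis always applies. This is exactly what the Leibnitz rule (\ref{LR}) guarantees, since $D$ maps $\mathcal{F}$ into $\mathcal{F}$; everything else is formal manipulation within the differential $\mathcal{F}$-module $\mathcal{W}$.
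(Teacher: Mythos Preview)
The paper offers no separate proof for this corollary beyond its placement immediately after Proposition~\ref{Prop3}; the intended reading of ``spanned by $\mathcal{Z}_V$'' is evidently the inclusion $R_{_{(n)}\alpha_p}\in\mathcal{Z}_V$, which is immediate: by (\ref{R's}) each $R_{_{(n)}\alpha_p}$ is a finite sum $\sum_m D^m W_{_{(n-m)}\alpha_p}$ with $W_{_{(k)}\alpha_p}\in\mathcal{Z}$, hence lies in $\mathcal{Z}_D=\mathcal{Z}_V$. That inclusion is all the paper invokes afterward (for instance when checking $\delta_\varepsilon T_a|_{\bar\Sigma}=0$), and your first paragraph already proves it.

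You go on to argue for the stronger equality $\mathrm{span}_{\mathcal F}\{R_{_{(n)}\alpha_p}\}=\mathcal{Z}_V$. The strategy is sound, but the second paragraph contains a slip: relation (\ref{2}) does \emph{not} express $D^p Z_{\alpha_p}$ as an $\mathcal{F}$-combination of $Z_{\alpha_p},DZ_{\alpha_p},\ldots,D^{p-1}Z_{\alpha_p}$, because the $w_k$ in (\ref{2}) are arbitrary elements of $\mathcal{Z}$, not scalar multiples of $Z_{\alpha_p}$. (Proposition~\ref{Prop1} produces $w_k=b_k w$ starting from (\ref{1}), but membership in $\mathcal{Z}_p$ is defined via (\ref{2}), and (\ref{2}) does not force (\ref{1}) with the same $p$.) Expanding $D^{p-k}w_k$ by Leibnitz therefore yields terms $D^j Z_{\beta_q}$ for various $\beta_q$, with $j\le p-1$ but possibly $j\ge q$, so your claimed generating set $\{D^nZ_{\alpha_p}:0\le n\le p-1\}$ for $\mathcal{Z}_D$ is not established as written. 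The fix is painless: merge your two steps into a single strong induction on $n$ ranging over all basis vectors simultaneously, handling $n\le p-1$ via the triangular formula for $R_{_{(n)}\alpha_p}$ you already wrote, and $n\ge p$ by applying $D^{n-p}$ to (\ref{2}); every lower-order term that appears is then covered by the inductive hypothesis regardless of which $Z_{\beta_q}$ it involves.
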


\begin{proof} Since ${\mathcal{Z}}_D\subset {\mathcal{Z}}_V$, it remains to
prove the converse inclusion. Let us  first consider the special
case where $V=0$. For a driftless system
$\mathcal{Z}_V=\bar{\mathcal{Z}}$ and we must  show that
$\mathcal{Z}_D$ contains the Lie closure of  $\mathcal{Z}$.

Increasing, if necessary, the order of the derivatives of the
gauge parameter by changing $\varepsilon \mapsto
\stackrel{_{(k)}}{\varepsilon}$, the gauge transformations
(\ref{gtrz}) can be brought to the form
\begin{equation}\label{GT}
    \delta_\varepsilon x^i = \stackrel{_{(n)}}{\varepsilon}{}^\alpha
    Z_\alpha^i(x)+ \cdots\,,\qquad
    \delta_\varepsilon\lambda^\alpha=\stackrel{_{(n+1)}}{\varepsilon}{}^\alpha
    + \cdots\,,
\end{equation}
where the dots stand for the terms with the lower order
derivatives of the gauge parameter. As the transformations
(\ref{GT}) exhaust \textit{all} the gauge symmetries of the
equations, they have to form an on-shell closed gauge algebra with
respect to the commutator of infinitesimal transformations:
\begin{equation}\label{Com}
    [\delta_{\varepsilon_1},\delta_{\varepsilon_2}]|_{\mathrm{on-shell}}=
    \delta_{\varepsilon_3}\,,
\end{equation}
where
\begin{equation}\label{}
    \varepsilon^\gamma_3=\sum_{n,m}\stackrel{_{(n)}}{\varepsilon}_1{}^{\alpha
    }f_{\alpha n\beta
    m}^\gamma(x,\lambda_k)\stackrel{_{(m)}}{\varepsilon}_2{}^{\beta}\,,
\end{equation}
with $f$'s being the structure functions of the gauge algebra. On
the other hand,
\begin{equation}\label{}
    [\delta_{\varepsilon_1},\delta_{\varepsilon_2}]x^i=\stackrel{(n)}{\varepsilon}{}_1^\alpha\stackrel{(n)}{\varepsilon}{}_2^\beta[Z_\alpha,Z_\beta]^i
    +\cdots\,,
\end{equation}
where the dots stand for the \textit{other} bilinear combinations
of the derivatives
$\stackrel{_{(k)}}{\varepsilon}{}^\alpha_{1,2}$. At every fixed
instant of time, the derivatives  of the parameters
$\stackrel{_{(k)}}{\varepsilon}{}^\alpha_{1,2}$ can take on
arbitrary predetermined values. Then, comparing the right and left
hand sides of (\ref{Com}), we conclude that all commutators
$[Z_\alpha,Z_\beta]$ are given by linear combinations of elements
from ${\mathcal{Z}}_D$. A similar analysis for the successive
commutators of the gauge transformations
$$
[\delta_{\varepsilon_m},[\delta_{\varepsilon_{m-1}},\cdots\,,[\delta_{\varepsilon_2},\delta_{\varepsilon_1}]\cdots
]
$$
shows that all multiple commutators
$$[Z_{\alpha_m},[Z_{\alpha_{m-1}},\cdots\,,[Z_{\alpha_2}, Z_{\alpha_1}]\cdots]$$
are also included into ${\mathcal{Z}}_D$. In other words, we see
that ${\bar{\mathcal{Z}}}\subset {\mathcal{Z}}_D$.

The general case of a non-vanishing drift, $V\neq 0$, can be
formally reduced to the previous one by the following trick. Let
us associate to equations (\ref{pr}) another dynamical system
\begin{equation}\label{gauging}
    \dot x^i =e V^i(x)+ \lambda^\alpha Z_\alpha(x)\,,
\end{equation}
where $e$ is a new undetermined multiplier. The system
(\ref{gauging}) is driftless, and hence reparametrization
invariant. The characteristic distribution of (\ref{gauging}) is
generated by the vector fields $V$ and $Z_\alpha$. The basis of
infinitesimal gauge transformations for this new system can be
chosen in the following way. First of all, the system is invariant
under reparametrizations. The corresponding gauge transformation
reads
\begin{equation}\label{rep}
    \delta_{\epsilon}x^i=\epsilon\dot x^i=\epsilon(eV^i+\lambda^\alpha
    Z_\alpha)\,,\qquad \delta_\epsilon \lambda^\alpha=\dot \epsilon \lambda^\alpha+\epsilon\dot\lambda{}^\alpha\,,
    \qquad \delta_\epsilon e=\dot \epsilon e+\epsilon\dot e\,.
\end{equation}
As is seen the generator of this transformation involves the
vector field $V$ and acts nontrivially on the new multiplier $e$.
Considering $e$ as a fixed function of time, we then define the
gauge transformations of the form \footnote{As before, we have
risen here the order of the gauge transformations to a certain
uniform value $n$.}:
\begin{equation}\label{transf}
    \delta_{\varepsilon} x^i=
    \stackrel{_{(n)}}{\varepsilon}{}^\alpha Z^i_\alpha+\cdots \,,\qquad
    \delta_{\varepsilon}\lambda^\alpha=\stackrel{_{(n+1)}}{\varepsilon}{}^\alpha
    +\cdots\,,\qquad \delta_{\varepsilon}e=0\,,
\end{equation}
The existence of these transformations easily follows from
Proposition \ref{Prop1}. According to the general formulas
(\ref{gtrz}), (\ref{R's}) the expansion coefficients of
$\delta_{\varepsilon}x^i$ in the time derivatives of
$\varepsilon$'s are given by linear combinations of the vector
fields
 $ D^m_eZ_{\alpha}$, where
$$D_e=-\partial-e[V,\;\cdot\;]-\lambda^\alpha[Z_{\alpha},\;\cdot\;]\, .$$
Since the leading terms in the variations $\delta_\epsilon x^i$
and $\delta_{\varepsilon}x^i$ span the entire characteristic
distribution $\mathcal{E}=\mathrm{span}\{V,Z_\alpha\}$, the
transformations (\ref{rep}), (\ref{transf}) exhaust all the gauge
symmetries of (\ref{gauging}).

We claim that the $\varepsilon$-transformations (\ref{transf})
constitute an ideal of the algebra of all gauge transformation (
\ref{rep}), (\ref{transf}), i.e.,
\begin{equation}\label{ideal}
[\delta_{\varepsilon_1},\delta_{\varepsilon_2}]|_{\mathrm{{on-shell}}}=\delta_{\varepsilon_3}\,,\qquad
 [ \delta_{\epsilon},\delta_{\varepsilon_1} ]|_{\mathrm{on-shell}}=\delta_{\varepsilon_2}\,.
\end{equation}
To prove this statement it is enough to apply the commutators in
the l.h.s. of the relations above to $e$. We have
\begin{equation}\label{}
[\delta_{\varepsilon_1},\delta_{\varepsilon_2}]e=0\,,\qquad [
\delta_{\epsilon},\delta_{\varepsilon_1} ]e =0\,.
\end{equation}
Thus the reparametrization transform (\ref{rep}) does not
contribute to the r.h.s. of (\ref{ideal}).

Arguments similar to those we have used in the driftless case
allow one to prove that all the multiple commutators of the vector
fields $V$ and $Z_\alpha$ are contained in $\mathcal{Z}_D$.
Consider, for example, the following commutator of the gauge
transformations:
\begin{equation}\label{com1}
    [\delta_\epsilon,\delta_{\varepsilon}]x^i=
    \epsilon \stackrel{_{(n)}}{\varepsilon}{}^\alpha
    e[V,Z_\alpha]x^i+\cdots\,.
\end{equation}
Because of relations (\ref{ideal}) the right hand side of
(\ref{com1}) must be a gauge transformation of the form
(\ref{transf}). We have
\begin{equation}\label{exp}
 [\delta_\epsilon,\delta_{\varepsilon}]x^i=\delta_{\varepsilon'}x^i\,,\qquad
    \varepsilon'{}^\gamma=\epsilon \stackrel{_{(n)}}{\varepsilon}{}^\alpha
    f^\gamma_{\alpha n}(x,e_s,\lambda_k)+ \cdots\,,
\end{equation}
where the dots stand for the other combinations of the time
derivatives of $\epsilon$ and $\varepsilon^\alpha$. Comparing the
coefficients at $\epsilon \stackrel{_{(n)}}{\varepsilon}$ in
(\ref{com1}) and (\ref{exp}), we see that every commutator
$e[V,Z_\alpha]$ is given by a linear combination of
$D^m_eZ_\alpha$. Setting $e=1$, we conclude that $[V,Z_\alpha]\in
\mathcal{Z}_D$.

In a similar manner, one can see that all the higher iterated
commutators of $Z_\alpha$ and $V$ belong to $\mathcal{Z}_D$.

\end{proof}

By construction, the structure functions (\ref{R's}) are
meromorphic functions of  $x^i$ and
$\stackrel{_{(s)}}{\lambda}{}^{\alpha}$. So, the
 trajectories $(x^i(t),\lambda^\alpha(t))$ can exist such that the
 transformations (\ref{gtrz}) are ill defined. It is easy to see,
however, that the functions (\ref{R's}) can be made real analytic
by an appropriate change of the gauge parameters:
\begin{equation}\label{}
    \varepsilon^\alpha \rightarrow
    \tilde{{\varepsilon}}{}^\alpha=U^\alpha_\beta\varepsilon^\beta\,,
\end{equation}
with $\det (U^\alpha_\beta)$ being a nonzero element of
$\mathcal{F}$. The real analytic transformations are well defined
on all the trajectories.

\vspace{3mm} \noindent {\textit{{Example}}}. Consider the
driftless system
\begin{equation}\label{example}
    \dot x^i=\lambda^\alpha Z_\alpha^i
\end{equation}
associated to the following characteristic distribution in
$\mathbb{R}^{10}$:
\begin{equation}\label{V}
    \begin{array}{l}
      \displaystyle Z_1=\frac{\partial}{\partial x^1}+x^2\frac{\partial}{\partial x^8}+x^3\frac{\partial}{\partial x^9}+x^4\frac{\partial}{\partial
       x^{10}}\,,\\[4mm]
    \displaystyle  Z_2=\frac{\partial}{\partial x^2}+x^3\frac{\partial}{\partial x^6}+x^4\frac{\partial}{\partial x^7}\,,\\[4mm]
    \displaystyle  Z_3=\frac{\partial}{\partial x^3}+x^4\frac{\partial}{\partial x^5}\,,\\[4mm]
    \displaystyle  Z_4=\frac{\partial}{\partial x^4}\,.\\
    \end{array}
\end{equation}

\noindent A straightforward computation shows  that the vector
fields $\{Z_\alpha, [Z_\beta,Z_\gamma]\}$ span the entire tangent
space of $\mathbb{R}^{10}$ and
 $$
[[Z_\alpha,Z_\beta],Z_\gamma]=0\,.
 $$
Introduce the vector field $X_n=\lambda_n^\alpha Z_\alpha$ and the
differential $D=-\partial- [X_0,\;\cdot\;]$. One can readily check
that
\begin{equation}\label{DV}
    \begin{array}{l}
      DX_0+X_1=0 \,,\\[3mm]
      D^2X_1+ 2DX_2+X_3=0\,,\\[3mm]
      D^3X_2+3D^2X_3+3DX_4+X_5=0\,, \\[3mm]
      D^4X_3+4D^3X_4+6D^2X_5+4DX_6+X_7=0\,. \\
    \end{array}
\end{equation}

\noindent These equalities are the particular cases of the general
identity
\begin{equation}\label{id}
    \sum_{k=0}^m C^k_mD^{m-k}X_{m+k-1}=0\,.
\end{equation}

 The vector fields $\{X_0,X_1,X_2,X_3\}$ form another basis in
the $\lambda$-distribution
$\mathrm{span}_\mathcal{F}\{Z_\alpha\}$. Comparing relations
(\ref{DV}) with (\ref{gtrz}), (\ref{R's}) and (\ref{compcond}), we
get the following gauge transformations for the differential
equations (\ref{example}):
$$
    \begin{array}{rcl}
      \delta_\varepsilon x & = & \varepsilon_1 X_0 \\[3mm]
       & + & \dot \varepsilon _2 X_1 + \varepsilon_2 (D X_1+2 X_2)\\[3mm]
       & + & \ddot \varepsilon_3 X_2 +\dot \varepsilon_3 (DX_2+3X_3) + \varepsilon_3
       (D^2 X_2+3DX_3+3X_4)\\[3mm]
       & + &  \dddot \varepsilon_4 X_3 +\ddot\varepsilon_4(DX_3+4X_4)+\dot \varepsilon_4(D^2X_3+4DX_4+6X_5)+\varepsilon_4
       (D^3X_3+4D^2X_4+6DX_6+4 X_6)\,,\\[5mm]
  \delta_\varepsilon \lambda& = & \dot\varepsilon_1\lambda_0 +\varepsilon_1
\lambda_1 \\[3mm]
   & + & \ddot \varepsilon_2\lambda_1 +2\dot\varepsilon_2\lambda_2 + \varepsilon_2\lambda_3
   \\[3mm]
   & + &  \dddot \varepsilon_3\lambda_2 +3\ddot\varepsilon_3\lambda_3 + 3\dot\varepsilon_3\lambda_4+\varepsilon_3\lambda_5
   \\[3mm]
   & + & \ddddot \varepsilon_4\lambda_3 +4\dddot\varepsilon_4\lambda_4 + 6\ddot\varepsilon_4\lambda_5+4\dot\varepsilon_4\lambda_6+\varepsilon_4\lambda_7 \,.\\
\end{array}
$$
Here $\lambda_0^\alpha=\lambda^\alpha$ and
$\lambda_{k+1}^\alpha={\dot\lambda}{}_k^\alpha$.

\vspace{3mm}

Consider now the general system (\ref{pr}) including both the
differential equations and the constraints. It follows from
Proposition \ref{Prop3} that all the gauge symmetries of the
differential equations alone are also the symmetries of the
constraints. Indeed, the distribution $\mathcal{E}$ is tangent to
the complete constraint surface $\bar{\Sigma}$, hence (by the
Frobenius theorem) the closure $\bar{\mathcal{E}}$ is also tangent
to $\bar \Sigma$. This means that not only the vector fields
$Z_\alpha$ and $V$ respect the constraint surface, but also all
their iterated commutators do the same:
\begin{equation}\label{}
    XT_a={F}(X){}_a^bT_b \qquad \forall X\in \bar{\mathcal{E}}\,.
\end{equation}
On the other hand, according to Corollary 3 from Proposition
\ref{Prop3}, the gauge variation $\delta_\varepsilon x^i$ is to be
spanned by the vectors from the distribution $\mathcal{Z}_V\subset
\bar{\mathcal{E}}$. This immediately gives $\delta_\varepsilon
T_a|_{\bar\Sigma}=0$.

An important point to stress is that the number of
\textit{nontrivial} gauge transformations can decrease when the
constraints are taken into account. The matter is that some of the
structure functions\footnote{As it has been already noticed, all
these functions can be chosen to be real analytic.} $R_n$ and
$U_n$, which are involved into the gauge transformation
(\ref{gtr}), can be trivial (\ref{triv}) with regard to the
constraints. In other words, some linear combinations of the gauge
transformations can vanish identically on $\bar \Sigma$. Such
transformations are also called trivial. All the gauge symmetry
transformations are to be considered modulo trivial ones.

To formulate a systematic algorithm for constructing a basis of
nontrivial and linearly independent gauge transformations in the
presence of constraints we need some  algebraic background. Given
a constrained dynamical system in the complete normal form
(\ref{pr}), denote by $\mathcal{R}$ the ring of analytical
functions of $x^i$ and of a finite number of variables
$\lambda^\alpha_k$.  Let $\mathcal{I}\subset \mathcal{R}$ denote
the principle ideal generated by the regular constraints
$\{T_a\}$. For simplicity sake, assume  that the ideal
$\mathcal{I}$ is simple\footnote{An ideal $\mathcal{I}\subset
\mathcal{R}$ is said to be simple if $ab\in \mathcal{I}$ implies
either $a\in \mathcal{I}$ or $b\in \mathcal{I}$. In the case where
$\mathcal{R}$  is the ring of analytical functions,   we have the
following criterion of simplicity: If the constraint surface
$\Sigma\subset \mathbb{R}^n$ associated to a set of regular
constraints $T_a=0$ is \textit{connected}, then the principal
ideal $\mathcal{I}=\langle T_a\rangle$ is simple.}. Then the
quotient $\mathcal{R}/\mathcal{I}$ is an integrality domain and we
can form the field of fractions $\mathcal{F}_{\mathcal{I}}
=\mathrm{Fr}(\mathcal{R}/\mathcal{I})$. The field
$\mathcal{F}_{\mathcal{I}}$ is a natural substitution for the
field of meromorphic functions $\mathcal{F}$ in the presence of
constraints. As a practical matter, it is more convenient to use
the following equivalent definition of
$\mathcal{F}_{\mathcal{I}}$. A meromorphic function  $f\in
\mathcal{F}$ is said to be regular if it admits a representation
$f=a/b$, where $a,b \in \mathcal{R}$ and  $b\notin \mathcal{I}$.
The ideal $\mathcal{I}$ being simple, all the regular meromorphic
functions constitute a ring $\mathcal{R}_{\mathcal{I}}\subset
\mathcal{F}$. It is easily  seen that $\mathcal{I}$ is the maximal
proper  ideal of $\mathcal{R}_{\mathcal{I}}$ and
$\mathcal{F}_{\mathcal{I}}=\mathcal{R}_{\mathcal{I}}/\mathcal{I}$.
Thus $\mathcal{F}_{\mathcal{I}}$ is  just a subquotient of
$\mathcal{F}$ and in all practical calculations we can replace the
elements of the field $\mathcal{F}_{\mathcal{I}}$ by their regular
representatives in $\mathcal{F}$.

Now define the $n$-dimensional vector space
$\mathcal{W}_{\mathcal{I}}$ over $\mathcal{F}_{\mathcal{I}}$ as
\begin{equation}\label{}
\mathcal{W}_{\mathcal{I}}=\mathrm{span}_{\mathcal{F}_\mathcal{I}}\left(\frac{\partial}{\partial
x^1}\,,\cdots \,,\frac{\partial}{\partial x^n}\right)\,.
\end{equation}
Again, we can view $\mathcal{W}_\mathcal{I}$ as a subquotient of
$\mathcal{W}$ and represent the vectors of
$\mathcal{W}_{\mathcal{I}}$ by regular elements of the
$\lambda$-distribution $\mathcal{W}$, i.e., those vectors of
$\mathcal{W}$ whose components are regular meromorphic functions
of $\mathcal{F}$.

Since  the vector fields $V$ and $Z_\alpha$ entering the
definition of our dynamical system (\ref{pr}) are assumed to be
regular and linearly independent modulo constraints, we can define
the $m$-dimensional  subspace $\mathcal{Z}_{\mathcal{I}}\subset
\mathcal{W}_{\mathcal{I}}$ as
\begin{equation}\label{}
    \mathcal{Z}_{\mathcal{I}}=\mathrm{span}_{\mathcal{F}_\mathcal{I}}\left(Z_1,\ldots,Z_m\right)\,.
\end{equation}
Using the fact that the  action of the vector fields $V$ and
$Z_\alpha$ preserves the constraints $T_a$, one can easily see
that the $\mathbb{R}$-linear operator $D: \mathcal{W}\rightarrow
\mathcal{W}$ defined by (\ref{D}) induces an $\mathbb{R}$-linear
operator in the subquotient $\mathcal{W}_{\mathcal{I}}$. We will
denote the latter operator by the same symbol $D$. Similar to the
case of unconstrained system we can filter the space
$\mathcal{Z}_I$ by the finite
 sequence  of subspaces
\begin{equation}\label{}
    \mathcal{Z}_\mathcal{I}\supset
    \mathcal{Z}^{_{(N)}}_{\mathcal{I}}\supset
    \mathcal{Z}^{_{(N-1)}}_{\mathcal{I}}\supset\cdots\supset
    \mathcal{Z}^{_{(1)}}_{\mathcal{I}}\supset 0\,,
\end{equation}
where
\begin{equation}\label{}
    Z^{_{(p)}}_{\mathcal{I}}=\left\{u\in \mathcal{Z}_{\mathcal{I}}\;\big |\; D^pu\in \mathcal{Z}_\mathcal{I}\cup D\mathcal{Z}_\mathcal{I}\cup\cdots\cup D^{p-1}
    \mathcal{Z}_\mathcal{I}\right\}\,.
\end{equation}
Using this filtration and the Euclidean metric in $\mathbb{R}^n$,
we can then split $\mathcal{Z}_\mathcal{I}$ in the direct sum of
subspaces
\begin{equation}\label{}
    \mathcal{Z}_\mathcal{I}=\mathcal{Z}^{1}_\mathcal{I}\oplus
    \mathcal{Z}^{2}_\mathcal{I}\oplus\cdots\oplus
    \mathcal{Z}^{N}_\mathcal{I}\,,\qquad
    \mathcal{Z}^p_\mathcal{I}\simeq
    \mathcal{Z}^{_{(p)}}_\mathcal{I}/\mathcal{Z}_\mathcal{I}^{_{(p-1)}}\,.
\end{equation}
Now a straightforward analog of Proposition \ref{Prop1} for the
differential $\mathcal{F}_\mathcal{I}$-module
$\mathcal{W}_\mathcal{I}$ states that for any $w_0\in
\mathcal{Z}^p_\mathcal{I}$ there exists a sequence of elements
$w_1$,...,$w_p\in \mathcal{Z}_\mathcal{I}$ such that
\begin{equation}\label{}
    D^pw_0+D^{p-1}w_1+\cdots +Dw_{p-1}+w_p=0\,.
\end{equation}
Let $W_0$,..., $W_p\in \mathcal{W}$  be  regular representatives
of the elements $w_0,...,w_p$.  Then, according to the general
formulae (\ref{gtrz}), the gauge transformations read:
\begin{equation}\label{gtrr}
    \delta_\varepsilon x^i=\sum_{n=0}^{p-1}\sum_{m=0}^{p-n-1}\stackrel{_{(n)}}{\varepsilon} D^mW^i_{{p-n-m-1}}
    \,,\qquad
    \delta_\varepsilon
    \lambda^\alpha= \sum_{n=1}^{p}\stackrel{_{(n)}}{\varepsilon}
    U^\alpha_{{p-n}}\,,
\end{equation}
where $W_n^i=U^\alpha_{{n}}Z^i_\alpha$. Let us choose a basis
$\{Z_{\alpha_p}\}$ in every subspace $\mathcal{Z}^p_\mathcal{I}$.
To any basis element we can  associate a gauge transformation of
the form (\ref{gtrr}) and this yields a complete basis of
undepressible gauge transformations in the presence of
constraints.

We conclude this section by some remarks concerning interpretation
of $\mathcal{Z}_V$ in control theory and the theory of gauge
systems. Observe that the restriction
$\mathcal{Z}_V|_{\bar{\Sigma}}$, being a completely integrable
distribution of $\bar\Sigma$, endows the constraint surface with
the structure of foliation $\mathcal{F}(\bar \Sigma)$. In the
context of gauge systems, the leaves of this foliation are know as
\textit{gauge orbits}.  Two points of the constraint surface
$\bar\Sigma$ are considered equivalent if they belong to the same
gauge orbit. Equivalent points define the same physical state, so
that the space of all physical states of the gauge system is
identified with the space of leaves
$\bar{\Sigma}/\mathcal{F}(\bar\Sigma)$. On the other hand, one of
the basic concepts of control theory is the notion of
\textit{attainable set}. By definition, a point $y\in \bar \Sigma$
belongs to the attainable set of a point $x\in \bar\Sigma$ if one
can join $y$ to $x$ by an integral curve of (\ref{pr}) with some
fixed functions $\lambda^\alpha(t)$. In the case where
$\mathcal{Z}_V$ and $\bar \Sigma$ are real anaclitic and $V\in
\mathcal{Z}_V$, the so-called \textit{orbit theorem } \cite{SA}
ensures that the attainable set of a point $x\in \bar \Sigma$
coincides with the gauge orbit passing through $x$. This
coincidence is not particularly surprising, since the gauge
transformations, involving arbitrary functions of time as
parameters, allow the (control) functions $\lambda^{\alpha}(t)$ to
take on arbitrary values at each given instant of time. So, the
doctrines of control and gauge theories are in a sense
complementary: the controllable part of dynamics is non-physical,
while the physical part is uncontrollable.

\section{Physical observables and an involutive normal form}

By definition, $t$-local values associated to a  dynamical system
in the complete form (\ref{pr}) are functions of the phase-space
coordinates $x^i$, undetermined multipliers $\lambda^\alpha$ and
their derivatives up to some finite order. Note that the time
derivatives of $x^i$ can always be excluded with the help of the
equations of motion (\ref{pr}). Therefore, without loss of
generality, we can identify the space of $t$-local values with the
space of real analytical functions $\mathcal{R}$. Among these
functions, there are trivial ones (\ref{triv}) that vanish on the
complete constraint surface $\bar{\Sigma}$. Two $t$-local values
$O_1$ and $O_2$ are said to be \textit{equivalent} if their
difference is a trivial function. In view of the regularity
assumptions, the last condition amounts to
\begin{equation}\label{sim}
O_1\sim O_2\qquad \Leftrightarrow\qquad  O_1-O_2 = F^aT_a\,.
\end{equation}
In the previous section, the space of equivalence classes was
identified with the quotient $\mathcal{R}/\mathcal{I}$, where
$\mathcal{I}$ is the ideal generated by the constraints.

The infinitesimal gauge transformation (\ref{gtrr}) maps any
solution of (\ref{pr}) to another one. Given an initial time
moment $t_0$, the gauge parameters $\varepsilon$ can be chosen
vanishing together with all their derivatives involved in the
gauge transform. So, the initial data remain the same, while the
solutions are different. The physical values should evolve in a
casual way, i.e., they should take the same value on every
solution originating from a given initial state. This implies that
the physical values are to be on-shell invariants of the gauge
transformations. As is seen from (\ref{gtrr}), the gauge variation
of $\lambda$'s starts with the highest time derivative of
$\varepsilon$'s and this derivative  does not contribute to the
gauge transformation of $x$'s. This suggests  that the gauge
invariant $t$-local values can depend on $\lambda$'s only through
the trivial contributions. In other words, each $t$-local physical
value can be represented by a function of $x^i$. In view of
Proposition \ref{Prop3} the gauge transformations for the
functions of $x^i$ are generated by the gauge distribution
${\mathcal Z}_V$ so that the subspace of trivial values is
automatically gauge invariant. Thus we are lead to the following
definition: A \textit{physical observable} of a dynamical system
brought to the complete normal form (\ref{pr})  is the equivalence
class of a phase-space function $O(x)$ that is on-shell invariant
under the action of the gauge distribution ${\mathcal Z}_V$, i.e.,
\begin{equation}\label{Observable}
ZO|_{\bar{\Sigma}} = 0\qquad \forall Z\in \mathcal{Z}_V\,.
\end{equation}

For the physical  observables, the equations of motion (\ref{pr})
reduce to the form
\begin{equation}\label{dO}
    \dot{O} = V O \,,
\end{equation}
where $V$ is the complete drift. The undetermined multipliers
$\lambda^\alpha$ drop out of these equations as the physical
observables are invariant under the action of the characteristic
distribution $\mathcal{Z}\subset\mathcal{Z}_V$. Notice that the
time derivative of an observable is again an observable because
$[V,{\mathcal Z}_V]\subset \mathcal{Z}_V$ and the complete
constraint surface $\bar{\Sigma}$ is invariant under the action of
$V$ and ${\mathcal Z}_V$. With initial data specified, the unique
existence of the solution $O(t)$, $t\geq t_0$, to equation
(\ref{dO}) follows from two facts: (i) the undetermined
multipliers are not contained in the equations and (ii) the
equivalence class of $VO$ is gauge invariant.  So, the right hand
side of (\ref{dO}) is the same for any solution $x^i(t)$ evolving
from a given initial state $x^i(t_0)=x^i_0$. All that confirms
ones again that the definition of the physical observables
provides them casual evolution.

As is seen, the following data are only needed to define the
physical observables  and their time evolution: the phase space
$U$, the complete constraint surface $\bar{\Sigma}$, the complete
drift $V$, and the gauge distribution ${\mathcal Z}_V$. The
quadruple $(U,\bar{\Sigma},V,\mathcal{Z}_V)$ can always be
unambiguously derived from equations of motion in their primary
normal form (\ref{lambda}), (\ref{Constr1}) following  the
algorithm of the previous sections. The converse is not true:
different primary normal forms can result in  the same quadruple
$(U, \bar \Sigma, V, {\mathcal Z}_V)$. These differences can be
much larger than just the equivalence relations (\ref{change}),
(\ref{aut}) for the primary normal form. In particular, the output
can be the same, even though the algorithm has been applied to
dynamical systems with characteristic distributions $\mathcal Z$
and primary constraint surfaces $\Sigma$ of different dimensions.
With the dynamics in mind of physical observables, it seems
reasonable to consider two dynamical systems as being equivalent
whenever they have the same complete constraint surfaces, complete
drifts and coinciding gauge distributions\footnote{In control
theory, two affine-control systems are called \textit{feedback
equivalent} whenever every solution $x(t)$ of equations
(\ref{lambda}) with every given control $\lambda (t)$ of one
system coincides with a certain solution of another one, possibly
with a different control. The feedback equivalence is much more
restrictive notion than the definition above: For feedback
equivalent systems all values $f(x)$, both controllable and
uncontrollable, have the same time evolutions, while for
physically equivalent systems, unobservable (i.e., controllable)
values may evolve differently.}.

Among the systems that are physically equivalent to
(\ref{lambda}), (\ref{Constr1}) there is a special one whose
characteristic distribution, primary constraint set, and primary
drift coincide, correspondingly, with the gauge distribution,
complete constraint set, and the complete drift of the original
system:
\begin{equation}\label{involutive}
\dot{x}^i= \bar{V}{}^i(x) + \lambda^{\bar \alpha} Z_{\bar
\alpha}^i(x) \, , \qquad {T}_{\bar a}(x)=0 \, .
\end{equation}
By construction, ${\mathcal
Z}_{\bar{V}}=\mathrm{span}\{Z_{\bar\alpha}\}$ is tangent to the
complete constraint surface $\bar\Sigma=\{x\in U| T_{\bar
a}(x)=0\}$. This means no compatibility conditions can arise from
(\ref{involutive}). Also,
 ${\mathcal Z}_V$ is involutive and invariant with respect to the complete drift $\bar V$.
Therefore, the gauge distribution for (\ref{involutive}) coincides
with the characteristic distribution. We call equations
(\ref{involutive}) the \textit{involutive normal form} of local
dynamics. The system  (\ref{involutive}) describes the observables
defined by the same conditions (\ref{Observable}) and having the
same evolution law (\ref{dO}) as the physical observables
associated to the original equations (\ref{lambda}),
(\ref{Constr1}). As the characteristic distribution is involutive,
the gauge transformations for (\ref{involutive}) take  quite a
simple form (\ref{gt0}).

Let us comment on the involutive normal form (\ref{involutive})
for a system whose primary normal form is the constrained
Hamiltonian dynamics \cite{Dirac}, \cite{HT}. If equations
(\ref{lambda}), (\ref{Constr1}) follow from the least action
principle (\ref{SDirac}), then the primary characteristic
distribution $\mathcal Z$ is spanned by the Hamiltonian vector
fields for the primary constraints and the drift is the
Hamiltonian vector field for the primary Hamiltonian. As it has
been already explained in Section 3, the transverse constraints
$T^{^\bot}$ correspond to the second-class constraints, the
transverse distribution ${\mathcal Z}_\bot$ is spanned by the
Hamiltonian vector fields for the primary second-class
constraints, and $\mathcal{Z}_\|$ is generated  by the Hamiltonian
vector field associated  to the primary first-class constraints.
Suppose we have no transverse constraints. Then the characteristic
distribution is tangent to the complete constraint surface. From
the viewpoint of Dirac's classification, this is the case of a
pure first-class system. Whenever the Dirac conjecture is
true\footnote{It is not always true, see \cite{HT} for
counterexamples.}, the Hamiltonian vector fields for \textit{all}
the first-class constraints generate the gauge transformations and
the involutive form of dynamics (\ref{involutive}) is again
variational. The corresponding action
\begin{equation}\label{TDirac}
    S[x,\lambda]=\int\left(\rho_i(x)\dot{x}{}^i - H_{\mathrm{tot}}(x,\lambda)\right)dt \, ,
\end{equation}
involves the total Hamiltonian $H_{\mathrm{tot}}=H +\lambda^{\bar
a}{T}_{\bar a}$ given by the sum of the original Hamiltonian
(\ref{Dirac}) and the linear combination of all the first-class
constraints (both primary and secondary) with independent Lagrange
multipliers. In this case, the gauge distribution is spanned by
the Hamiltonian vector fields $X_{\bar a}=\{T_{\bar
a},\;\cdot\,\}$ for the first-class constraints.

Whether the Dirac conjecture is true or not, the algorithm of
Section 4 shows that the gauge distribution is included into
$\mathrm{span} \{ X_{\bar a}\}$ provided that the primary normal
form was variational. This does not mean, however, that
\textit{every} Hamiltonian vector field $X_{\bar a}$ should belong
to the gauge distribution. In case $\mathcal{Z}_V\neq
\mathrm{span}\{X_{\bar a}\}$,  the property conjectured by Dirac
does not hold and the involutive form of the constrained
Hamiltonian dynamics is \textit{not variational} anymore. Be it as
it may, the algorithm proposed in this paper will automatically
separate the constraints whose Hamiltonian vector fields
contribute to the gauge transformations from those which
Hamiltonian vector fields do not. This allows us to systematically
identify all the true gauge symmetries for any regular,
constrained Hamiltonian system, even though the system does not
satisfy the Dirac conjecture.

Concluding this section, let us briefly discuss the  issue of
equipping  the involutive dynamics (\ref{involutive}) with a
certain Hamiltonian structure when the original  equations of
motion are not variational. The basic idea is that only the
algebra of physical observables (\ref{Observable}), not the
algebra of all $t$-local values, should be equipped with the
Poisson bracket. In other words, it is sufficient to require the
Jacobi identity to hold only when the Poisson  bracket is applied
to a pair of observables. Besides, the Poisson bracket is to be
compatible with the time evolution (\ref{involutive}) in the sense
that the time derivative should differentiate the bracket of two
observables (not arbitrary functions) by the Leibnitz rule. In
\cite{LS1}, we introduced a notion of \textit{ weak Hamiltonian
structure}, which satisfies all the above properties. A similar
construction was also studied in \cite{CaFe}.

Formally, a weak Hamiltonian structure associated to an involutive
dynamical system is defined by the quadruple $(T,Z,V,P)$, where
$T=\{T_a\}$ is a set of constraints, $Z=\{Z_\alpha\}$ are the
generators of a characteristic(=gauge) distribution, $V$ is a
drift, and $P=P^{ij}\partial_i\wedge\partial_j$ is a weak Poisson
bivector. So all the objects are polyvector fields of degree 0, 1,
and 2. In terms of the Schouten bracket of polyvector fields the
defining relations for a weak Hamiltonian structure read
\begin{equation}\label{invol}
  [Z_\alpha,T_a]= A_{\alpha a}^b T_b\,,\qquad [Z_\alpha ,  Z_\beta] = B_{\alpha\beta}^\gamma Z_\gamma +
    T_a C^a_{\alpha\beta}\,,
\end{equation}
\begin{equation}\label{invol'}
  [T_a,V]= D_{a}^b T_b\,,\qquad [Z_\alpha ,  V] = E_{\alpha}^\beta Z_\beta +
    T_a F^a_{\alpha}\,,
\end{equation}
\begin{equation}\label{weakJ}
     [P,P]=G^\alpha\wedge Z_\alpha -T^aH_a\,, \qquad [V,P]=I^\alpha\wedge Z_\alpha + T^a J_a\,,
\end{equation}
\begin{equation}\label{weakP}
    [T_a,P]=K_a^\alpha \wedge Z_\alpha -T_b L^b_a\,,\qquad [Z_\alpha, P]=M_\alpha^\beta \wedge Z_\beta-T_aN^a_\alpha\,,
\end{equation}
where $A,B,C,...,N$ are some polyvector fields. Relations
(\ref{invol}) and (\ref{invol'}) express the fact of involutivity
of the dynamical system. The first relation in (\ref{weakJ})
identifies $P$ as a weak Poisson bivector, i.e., $P$ satisfies the
Jacobi identity  modulo constraints and gauge symmetry generators.
Then the second relation in (\ref{weakJ}) tells us that the
evolution  (\ref{dO}) generated by  $V$ preserves the Poisson
algebra of physical observables (i.e., $V$ is a weakly Poisson
vector field).   Relations (\ref{weakP}) mean that the Hamiltonian
vector fields for the constraints are spanned  by the gauge
generators modulo trivial terms and that the generators are weakly
Poisson vector fields. As a result the trivial functions
constitute the Poisson ideal $\mathcal{I}\subset \mathcal{R}$ that
makes possible to speak about the Poisson algebra of physical
observables $\mathcal{R}/{\mathcal{I}}$.

As it was shown in \cite{LS1}, the weak Hamiltonian structure
(\ref{invol}-\ref{weakP}) admits a nice BRST imbedding that
generalize the usual BFV-BRST formalism for the Hamiltonian
systems with first-class constraints. Starting with this BRST
embedding, it is possible to construct a fully consistent
deformation quantization of a weak Hamiltonian system without any
reference to variational principles. The output is a weakly
associative $\ast$-product inducing an associative quantum
multiplication in the space of physical observables identified
with a certain BRST cohomology. The construction essentially
relies on the superextension of the formality theorem \cite{CaFe}
and may be thought of as a generalization of the Kontsevich
deformation quantization to the case of non-Hamiltonian gauge
theories.

The ``odd counterpart''  of the weak Hamiltonian structure is
known as a \textit{Lagrange structure} \cite{KLS}. The existence
of the latter  structure is much less restrictive for the
equations of motion than the requirement to be variational.
Whereas the weak Hamiltonian structure is aimed at the
construction of $\ast$-product, the Lagrange structure allows one
to perform the path-integral quantization of a (non-)variational
dynamical system. For the variational dynamics, this quantization
is shown to reduce to the standard BV quantization \cite{KLS},
\cite{LS2}. The analysis of classical gauge symmetries performed
in Sections 4, 5 may be viewed as a pre-requisite for extending
the local BRST cohomology techniques \cite{BBH} from the
Lagrangian to non-Lagrangian gauge theories \cite{KLS},
\cite{LS2}. Notice, however, that to make the technique explicitly
covariant in field theoretical context, our analysis should
address the issues of locality in multidimensional space. These
issues are beyond the scope of this work.

\section{Conclusion}

In this paper we propose an algorithm of bringing the general
local dynamics to certain normal forms, which allow us to identify
all gauge symmetries. These normal forms can serve as starting
point for the BRST imbedding and deformation quantization of not
necessarily variational dynamics. Let us briefly summarize the
essentials of the proposed algorithm.

The algorithm becomes applicable after imposing certain regularity
conditions on the local equations of motion. Then, we see that the
general regular equations can be brought to the primary normal
form that includes the differential equations (\ref{lambda}) with
undetermined multipliers and the phase-space constraints
(\ref{Constr1}). In this form, the dynamics are defined by the
three ingredients: the primary constraints, primary characteristic
distribution, and primary drift. In the case of variational
dynamics, this corresponds to Hamiltonian equations subject to
primary constraints. The main problem we address in this paper is
finding all the gauge symmetry transformations (\ref{gt}) for the
equations of motion (\ref{lambda}), (\ref{Constr1}). To this end,
we first consider the differential consequences of the equations
of motion. A basic consistency requirement is the conservation of
the constraints in time.  This results  in a multi-step procedure
of iterating secondary constraints and determining a part of the
multipliers. After terminating the procedure, we are left with
equations (\ref{pr}) - the complete normal form of local dynamics
- which assume no further restrictions on the phase-space
coordinates and/or undetermined multipliers.

Having the dynamical system brought to the complete normal form,
one can go over to finding its gauge symmetries.  We find that the
gauge transformations  (\ref{gtrr}) are generated by the gauge
distribution ${\mathcal Z}_V$. The gauge distribution is the Lie
closure of the tangential characteristic distribution ${\mathcal
Z}$ supplemented by all its iterated commutators with the drift
vector field (\ref{ZViter}). So, the gauge symmetries are
explicitly identified in the same terms that define the original
system (\ref{lambda}), (\ref{Constr1}).

Then we turn to the notion of physical observables. These are
understood as gauge invariants of the dynamics. As the gauge
symmetry is generated by the gauge distribution ${\mathcal Z}_V$,
the observables are defined as equivalence classes (\ref{sim}) of
phase-space functions annihilated by $\mathcal{Z}_V$. This
definition (\ref{Observable}) ensures that the time evolution of
observables (\ref{dO}) is casual. It also suggests to consider two
dynamical systems as equivalent to each other whenever they have
coinciding gauge distributions, complete constraint sets and
complete drifts. As a result, the physical observables and their
time evolutions are the same for all the equivalent systems. In
every equivalence class of the dynamical systems, there is a
special representative whose characteristic distribution is
involutive, tangent to the constraint surface, and is preserved by
the drift. In this form, called the involutive normal, the gauge
transformations take the most simple form (\ref{gt0}). It is the
involutive normal form that serves as a starting point for the
deformation quantization based on the concept of weak Poisson
structure (\ref{invol}-\ref{weakP}).

\vspace{3mm}

\textbf{Acknowledgements.} We are thankful to E.M.~Gorbatenko for
illuminating discussions on regularity of analytic varieties. This
work is partially supported by the RFBR grant no 06-02-17352-a and
by the grant from Russian Federation President Programme of
Support for Leading Scientific Schools no 871.2008.02. SLL is
partially supported by the RFBR grant 08-01-00737-a.

\end{document}